\def\authorrunning{\textit{M. R. Fellows et al.}}
\def\titlerunning{What is known about Vertex Cover Kernelization?}
\markboth{\authorrunning}{\titlerunning}
\definecolor{linkcolor}{RGB}{31,31,222}
\let\leftold\left
\let\rightold\right
\renewcommand{\left}{\mathopen{}\mathclose\bgroup\leftold}
\renewcommand{\right}{\aftergroup\egroup\rightold}
\crefname{definition}{Definition}{Definitions}
\crefname{theorem}{Theorem}{Theorems}
\crefname{lemma}{Lemma}{Lemmata}
\crefname{corollary}{Corollary}{Corollaries}
\crefname{observation}{Observation}{Observations}
\crefname{fact}{Fact}{Facts}
\crefname{remark}{Remark}{Remarks}
\crefname{conjecture}{Conjecture}{Conjectures}
\crefname{proposition}{Proposition}{Propositions}
\crefname{figure}{Figure}{Figures}
\crefname{table}{Table}{Tables}
\crefname{section}{Section}{Sections}
\crefname{subsection}{Subsection}{Subsections}
\crefname{subsubsection}{Subsection}{Subsections}
\crefname{algorithm}{Algorithm}{Algorithms}
\crefname{example}{Example}{Examples}
\crefname{note}{Note}{Notes}
\crefname{claim}{Claim}{Claims}
\crefname{enumi}{}{}
\crefname{equation}{}{}
\crefname{reduction}{Reduction}{Reductions}
\crefname{subreduction}{Reduction}{Reductions}
\crefname{property}{Property}{Properties}
\newtheorem{reduction}{Reduction}
\newtheorem{claim}{Claim}
\newtheorem{observation}{Observation}
\newtheorem{conjecture}{Conjecture}
\newtheorem{theorem}{Theorem}
\newtheorem{lemma}{Lemma}
\newtheorem{proposition}{Proposition}
\newtheorem{definition}{Definition}
\theoremstyle{nonumberplain}
\newtheorem{proof}{Proof}
\newcommand\qed{\hfill$\square$}
\renewcommand{\paragraph}{%
  \@startsection{paragraph}{4}%
  {\z@}{1.5ex \@plus 1ex \@minus .2ex}{-1em}
  {\normalfont\normalsize\bfseries}%
}
\begin{document}

\title{What is Known About Vertex Cover Kernelization?\texorpdfstring{\thanks{Michael R.~Fellows, Lars Jaffke, Al\'{i}z Izabella Kir\'{a}ly and Frances A. Rosamond acknowledge support from the Bergen Research Foundation (BFS).}}}{{}}	

\author[1]{Michael R. Fellows}
\author[1]{Lars Jaffke}
\author[1]{Aliz Izabella Kir{\'a}ly}
\author[1]{Frances A. Rosamond}
\author[2]{Mathias Weller}

\affil[1]{Department of Informatics, University of Bergen, Norway}
\affil[ ]{\texttt{\{michael.fellows, lars.jaffke, aliz.kiraly, frances.rosamond\}@uib.no}}
\affil[2]{CNRS, LIGM, Universit\'{e} Paris EST, Marne-la-Vall\'{e}e, France}
\affil[ ]{\texttt{mathias.weller@u-pem.fr}}
	
\pgfdeclarelayer{background}
\pgfsetlayers{background,main}
\tikzstyle{vertex}=[circle, draw, fill=white]
\tikzstyle{small}=[inner sep=2.2pt]
\tikzstyle{smallvertex}=[vertex, small]
\newcommand{\N}{\ensuremath{\mathds{N}}}
\newcommand{\glue}[2]{\ensuremath{{#1}\oplus{#2}}}
\newcommand{\profile}[1]{\ensuremath{P_{#1}}}
\definecolor{FellowsEtAlTwo_gray}{rgb}{0.5,0.5,0.5}
\newcommand{\defeq}{\coloneqq}
\newcommand{\card}[1]{\left| #1\right|}
\newcommand{\bN}{\mathds{N}}
\newcommand{\bR}{\mathds{R}}
\newcommand{\cO}{\mathcal{O}}
\newcommand{\cX}{\mathcal{X}}
\newcommand{\vc}{\textsc{Vertex Cover}}
\newcommand{\ETH}{\textsf{ETH}}
\newcommand{\NP}{\mathsf{NP}\xspace}
\newcommand{\coNP}{\mathsf{coNP}\xspace}
\newcommand{\poly}{\mathsf{poly}\xspace}
\newcommand{\FPT}{\mathsf{FPT}\xspace}
\newcommand{\barrierconstant}{\zeta_{VC}}
\newcommand{\barrierdegree}{\delta_{VC}}
\renewcommand{\thereduction}{R.\arabic{reduction}}
\renewcommand{\thesubreduction}{R.\arabic{reduction}.\arabic{subreduction}}
\newcommand{\claimqed}{\hspace*{\fill}$\lrcorner$}
\newenvironment{clproof}{\begin{proof}}{\end{proof}}

\maketitle	

\setcounter{footnote}{0}

\begin{abstract}
	We are pleased to dedicate this survey on kernelization of the \textsc{Vertex Cover} problem, to Professor Juraj Hromkovi\v{c} on the occasion of his 60th birthday. The \textsc{Vertex Cover} problem is often referred to as the \emph{Drosophila} of parameterized complexity. It enjoys a long history. New and worthy perspectives will always be demonstrated first with concrete results here. This survey discusses several research directions in \textsc{Vertex Cover} kernelization. The \emph{Barrier Degree} of \textsc{Vertex Cover}  is discussed. We have reduction rules that kernelize vertices of small degree, including in this paper new results that reduce graphs almost to minimum degree five. Can this process go on forever? What is the minimum vertex-degree barrier for polynomial-time kernelization? Assuming the Exponential-Time Hypothesis, there is a minimum degree barrier. The idea of \emph{automated kernelization} is discussed. We here report the first experimental results of an AI-guided branching algorithm for \textsc{Vertex Cover} whose logic seems amenable for application in finding reduction rules to kernelize small-degree vertices. The survey highlights a central open problem in parameterized complexity. Happy Birthday, Juraj!
\end{abstract}
	
\section{Introduction and Preliminaries}
A vertex cover of a graph is a subset of its vertices containing at least one endpoint of each of its edges. The \textsc{Vertex Cover} problem asks, given a graph $G$ and an integer $k$, whether $G$ contains a vertex cover of size at most $k$.

The study of the \vc{} problem lies at the roots of the theory of $\NP$-completeness: It is one of Karp's 21 $\NP$-complete problems~\cite{FellowsEtAlTwo_Kar72} and plays a central role in the monograph of Garey and Johnson~\cite{FellowsEtAlTwo_GJ79}. However, interest in the \vc~problem reaches far beyond pure theory. One reason is that it naturally models \emph{conflict resolution},\footnote{In the textbook~\cite{FellowsEtAlTwo_CFK15}, the problem was entertainingly introduced as `Bar Fight Prevention'.} 
a problem occurring in numerous scientific disciplines, with an international workshop devoted to it~\cite{FellowsEtAlTwo_COREDEMA16}. Other applications include~classification methods (see, e.g., \cite{FellowsEtAlTwo_GKK14}), computational biology~(e.g., \cite{FellowsEtAlTwo_CLV08}), and various applications follow from the duality of \textsc{Vertex Cover} with the \textsc{Clique} problem (see, e.g., \cite{FellowsEtAlTwo_ACF04}). The latter finds numerous applications in fields such as computational biology and bioinformatics~\cite{FellowsEtAlTwo_BW06,FellowsEtAlTwo_Kar11,FellowsEtAlTwo_KLW96,FellowsEtAlTwo_SBS05,FellowsEtAlTwo_YSK04}, computational chemistry~\cite{FellowsEtAlTwo_DW96,FellowsEtAlTwo_LG07,FellowsEtAlTwo_WBD98}, and electrical engineering~\cite{FellowsEtAlTwo_CS93,FellowsEtAlTwo_HP98}.

In \emph{parameterized/multivariate algorithmics}~\cite{FellowsEtAlTwo_CFK15,FellowsEtAlTwo_DF13,FellowsEtAlTwo_Nie02}, the objects of study are computational problems whose instances are additionally equipped with a integer $k$, the \emph{parameter}, typically expressing some structural measure of the instance of the problem. The goal is to design algorithms for hard problems whose runtime confines the combinatorial explosion to the parameter $k$ rather than the size of the input. A parameterized problem is called \emph{fixed-parameter tractable} if it can be solved in time $f(k)\cdot n^{\cO(1)}$ where $f$ is some computable function, $k$ the parameter and $n$ the input size. The second central notion in the field of parameterized algorithms is that of a \emph{kernelization}~\cite{FellowsEtAlTwo_DF95,FellowsEtAlTwo_DFS99,FellowsEtAlTwo_Fel06}, a polynomial-time algorithm (usually described as a set of \emph{reduction rules}) that takes as input an instance $(I, k)$ of a parameterized problem and outputs an equivalent instance $(I', k')$, where $\card{I'} + k' \le g(k)$ for some computable function $g$.\footnote{As the focus of this text is on the \vc ~problem, we refer to~\cite{FellowsEtAlTwo_GN07,FellowsEtAlTwo_LMS12} for general surveys on the subject of kernelization and to~\cite{FellowsEtAlTwo_MRS11} for a survey on the corresponding lower bound machinery.}

Kernelization (for the first time!) provided a theory of preprocessing with mathematically provable guarantees. On the other end, kernelization has immediate practical implications, as demonstrated  by Karsten Weihe's problem~\cite{FellowsEtAlTwo_Wei98,FellowsEtAlTwo_Wei00} (see also~\cite{FellowsEtAlTwo_Fel00,FellowsEtAlTwo_Fel01}) concerning the train systems in Europe. By the means of two simple reduction rules, graphs (instances) on $10,000$ vertices are reduced to equivalent instances whose connected components are of size at most $50$, making the reduced instance solvable exactly even by brute force in reasonable time, after the preprocessing, even though the general problem is $\NP$-hard.
Similar reduction rules have been successfully applied in the context of cancer research~\cite{FellowsEtAlTwo_BM10} and spread of virus~\cite{FellowsEtAlTwo_EM17}.

The notions of fixed-parameter tractability and kernelization are tightly linked. It has been shown by Cai et al. that a parameterized problem is fixed-parameter tractable if and only if it has a (polynomial-time) kernelization algorithm~\cite{FellowsEtAlTwo_CCDF97}.
Kernelization for the \vc ~problem, which is often referred to as the \emph{Drosophila} of parameterized complexity~\cite{FellowsEtAlTwo_DF13,FellowsEtAlTwo_FGG10,FellowsEtAlTwo_GNW07,FellowsEtAlTwo_Nie02}, enjoys a long history. In 1993, the first kernel on $\cO(k^2)$ vertices was obtained, and is accredited to Buss~\cite{FellowsEtAlTwo_BG93}, with more refined reduction rules given in~\cite{FellowsEtAlTwo_BFR98}. Kernels with a linear number of vertices were obtained in various ways. Using classic graph theoretic results, Chor et al.~gave a kernel on $3k$ vertices~\cite{FellowsEtAlTwo_CFJ04} (see also~\cite{FellowsEtAlTwo_Fel03}), a kernel on $2k$ vertices was obtained via an LP-relaxation by Chen et al.~\cite{FellowsEtAlTwo_CKJ01} and another kernel on $2k$ vertices without the use of linear programming was obtained by Dehne et al.~\cite{FellowsEtAlTwo_DFR04}.
The next series of improvements gave kernels on $2k - c$ vertices~\cite{FellowsEtAlTwo_SY11} and the current champion which is due to Lampis has $2k-c\log k$ vertices~\cite{FellowsEtAlTwo_Lam11}, where in the latter two $c$ is any fixed constant. Another kernel on $2k - \cO(\log k)$ vertices was observed in~\cite{FellowsEtAlTwo_NRR12}. An experimental evaluation of several of the earlier kernels was carried out in~\cite{FellowsEtAlTwo_ACF04}.

There is no known subquadratic bound on the number of \emph{edges} in any kernel for \textsc{Vertex Cover}, and the question whether such a kernel exists was a long standing open question in multivariate algorithmics. It was finally shown that up to logarithmic factors, \vc ~kernels with a quadratic number of edges are likely to be optimal: Dell and van Melkebeek, building on work of Bodlaender, Downey, Fellows and Hermelin~\cite{FellowsEtAlTwo_BDFH08,FellowsEtAlTwo_BDFH09}, also Fortnow and Santhanam~\cite{FellowsEtAlTwo_FS08}, showed that there is no kernel on $\cO(n^{2 - \varepsilon})$ bits, for any $\varepsilon > 0$, unless $\NP \subseteq \coNP/\poly$~\cite{FellowsEtAlTwo_DM14}. The latter would imply that the polynomial hierarchy collapses to its third level~\cite{FellowsEtAlTwo_Yap83} which is widely considered to be implausible by complexity theorists.\footnote{We also refer to~\cite[pages 19f]{FellowsEtAlTwo_Jan13} and~\cite[Appendix A]{FellowsEtAlTwo_Wel13} for brief accounts of the implausibility of $\NP \subseteq \coNP/\poly$.}

In another line of research, following the \emph{parameter ecology} program~\cite{FellowsEtAlTwo_FJR13}, the existence of kernels for \vc ~w.r.t.\ parameters that take on smaller values than the vertex cover number was studied. Such parameterizations are typically referred to as \emph{structural} parameterizations of \vc. The first such result is due to Jansen and Bodlaender who gave a kernel on $\cO(\ell^3)$ vertices, where $\ell$ is the size of a feedback vertex set of the graph~\cite{FellowsEtAlTwo_JB13}. Further results include polynomial kernels where the parameter is the size of an odd cycle traversal or a K{\"o}nig deletion set~\cite{FellowsEtAlTwo_KW12}, the size of vertex deletion sets to maximum degree at most two~\cite{FellowsEtAlTwo_MRS15}, pseudoforest~\cite{FellowsEtAlTwo_FS16} and $d$-quasi forest~\cite{FellowsEtAlTwo_HK17}, or small treedepth~\cite{FellowsEtAlTwo_BS17}. Using the above mentioned lower bound machinery, it was shown that there is no kernel polynomial in the size of a vertex deletion set to chordal or perfect graphs unless $\NP \subseteq \coNP/\poly$~\cite{FellowsEtAlTwo_BJK14,FellowsEtAlTwo_FJR13}.


As \vc{} is the primary intellectual ``lab animal'' in parameterized complexity, \emph{new and worthy perspectives will always be demonstrated first with concrete results here.}
We discuss several research directions in (\vc) kernelization.
The first one is based on the observation that several reduction rules are known to kernelize  vertices of small degree~\cite{FellowsEtAlTwo_BG93,FellowsEtAlTwo_FS99,FellowsEtAlTwo_Stege00}; a natural question is whether this process can go on `forever', i.e., whether we can find, for any fixed constant $d \in \bN$, a set of reduction rules that kernelize in polynomial time to a reduced graph (the kernel) of minimum degree  $d$. 
On the negative side, we observe that unless the Exponential-Time Hypothesis~\cite{FellowsEtAlTwo_IP01,FellowsEtAlTwo_IPZ01} fails, this is not the case even if the exponent in the polynomial-time kernelization is some arbitrary function of  $d$.
On the positive side, we give a clear account of reduction rules for \vc ~that were first observed by Fellows and Stege~\cite{FellowsEtAlTwo_FS99} that kernelize instances to minimum degree `almost five' (see \cref{FellowsEtAlTwo_thm:kernelization:degree} for the exact statement) and discuss how this question is closely related to finding faster fpt-algorithms for \vc, a question that lies at the very heart of parameterized complexity research.

In the light of the ongoing machine-learning and artificial intelligence revolution, one might wonder whether AI could assist in the search for new reduction rules of parameterized problems as well. While this question seems far out, we report first experimental results of an AI-guided branching algorithm for \textsc{Vertex Cover} whose logic seems amenable for application in finding new reduction rules to kernelize to increasing minimum degree.

The rest of this paper is organized as follows. In the remainder of this section, we give preliminary definitions and introduce the necessary background. In \cref{FellowsEtAlTwo_sec:standard:methods} we review some classic \textsc{Vertex Cover} kernels.  \Cref{FellowsEtAlTwo_sec:small:degree} is devoted to the topic of kernelizing small-degree vertices. We there give a description of reduction rules observed by Fellows and Stege~\cite{FellowsEtAlTwo_FS99} (see also~\cite{FellowsEtAlTwo_Stege00}). In Section~\cref{FellowsEtAlTwo_sec:automated} we report results on an AI-guided branching algorithm whose ideas might lay the foundations of \emph{automatically generated} reduction rules for \textsc{Vertex Cover}.  We conclude with an  open problem in \cref{FellowsEtAlTwo_sec:open:problems}.
	
\paragraph{Technical Preliminaries and Notation.} For two integers $a$ and $b$ with $a < b$, we let $[a..b] \defeq \{a, a+1, \ldots, b\}$ and for a positive integer $a$, we let $[a] \defeq [1..a]$.

Throughout the paper, each graph is finite, undirected and simple. Let $G$ be a graph. We denote the vertex set of $G$ by $V(G)$ and the edge set of $G$ by $E(G) \subseteq \binom{V}{2}$. For a vertex $v \in V(G)$, we denote by $N(v)$ the \emph{(open) neighborhood} of $G$, i.e., $N(v) \defeq \{w \mid \{v, w\} \in E(G)\}$. The \emph{degree of $v$} is the size of the neighborhood of $v$, i.e., $\deg(v) \defeq \card{N(v)}$. We define the \emph{closed neighborhood} of $v$ as $N[v] \defeq N(v) \cup \{v\}$. For a set of vertices $W \subseteq V(G)$, we let $N(W) \defeq \bigcup_{w \in W} N(w)$ and $N[W] \defeq N(W) \cup W$. For a set of vertices $\{v_1, \ldots, v_r\}$, we use the shorthand $N(v_1, \ldots, v_r) \defeq N(\{v_1, \ldots, v_r\})$.
	A vertex set $C \subseteq V(G)$ is called a \emph{clique}, if for each pair of distinct vertices $c_1, c_2 \in C$, $\{c_1, c_2\} \in E(G)$. A vertex set $I \subseteq V(G)$ is called \emph{independent}, if for each pair of distinct vertices $v_1, v_2 \in I$, $\{v_1, v_2\} \notin E(G)$. A graph $G$ is called \emph{bipartite}, if there is a partition $(X, Y)$ of its vertex set such that $X$ and $Y$ are independent.
	
	For two graphs $G$ and $H$, we denote by $H \subseteq G$ that $H$ is a \emph{subgraph} of $G$, i.e.\ that $V(H) \subseteq V(G)$ and $E(H) \subseteq E(G)$. For a vertex set $X \subseteq V(G)$, we denote by $G[X]$ the subgraph of $G$ \emph{induced} by $X$, i.e., $G[X] \defeq (X, E(G) \cap \binom{X}{2})$. We let $G - X \defeq G[V(G) \setminus X]$ and we use the shorthand $G - v$ for $G - \{v\}$. 
	For two disjoint vertex subsets $X, Y \subseteq V(G)$, we denote by $G[X, Y]$ the bipartite subgraph of $G$ induced by $(X, Y)$, that is $G[X, Y] \defeq (X \cup Y, \{\{x, y\} \in E(G) \mid x \in X, y \in Y\})$.
		
	
	A subgraph $P \subseteq G$ is called a \emph{path} if all its vertices have degree at most two in $P$ and there are precisely two distinct vertices in $V(P)$ that have degree one in $P$, called the \emph{endpoints} of $P$. For $s, t \in V(G)$, a path is called \emph{$(s, t)$-path} if it is a path with endpoints $s$ and $t$.
	
	We call two edges $e, f \in E(G)$ \emph{adjacent} if they share an endpoint, i.e., if there exist vertices $v, w, x \in V(G)$ such that $e = \{v, w\}$ and $f = \{v, x\}$. A \emph{matching} is a set of pairwise non-adjacent edges. 
	We say that a matching $M$ \emph{saturates} a set of vertices $W \subseteq V(G)$, if for all $v \in W$, there is a pair $\{v, w\} \in M$.
	
	Given a set of vertices $X \subseteq V(G)$, we call the operation of adding to $G$ a new vertex $x$ with neighborhood $N(X)$ and deleting all vertices in $X$ the \emph{contraction} of $X$.
	
\paragraph{Exponential-Time Hypothesis ($\ETH$).} In 2001, Impagliazzo and Paturi made a conjecture about the complexity of \textsc{$3$-Sat}, the problem of determining whether a given Boolean formula in conjunctive normal form with clauses of size at most $3$ has a satisfying assignment. This conjecture is known as the \emph{Exponential-Time Hypothesis ($\ETH$)} and has lead to a plethora of conditional lower bounds, see, e.g., the survey~\cite{FellowsEtAlTwo_LMS13} or~\cite[Chapter 14]{FellowsEtAlTwo_CFK15}. Formally, $\ETH$ can be stated as:\footnote{The $\cO^*$-notation suppresses polynomial factors in $n$.} 
\begin{conjecture}[$\ETH$~\cite{FellowsEtAlTwo_IP01,FellowsEtAlTwo_IPZ01}]
	There is an $\varepsilon > 0$ such that \textsc{$3$-Sat} on $n$ variables cannot be solved in time $\cO^*(2^{\varepsilon n})$.
\end{conjecture}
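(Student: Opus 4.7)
The plan here is complicated by the fact that the final statement is labelled a conjecture rather than a theorem, and a rigorous proof of it would immediately entail $\mathsf{P}\neq\NP$ (and considerably more). So the honest proof proposal is to describe (i) why we find the statement plausible, (ii) what a proof attempt would have to overcome, and (iii) the auxiliary results that have turned $\ETH$ into the technically robust working hypothesis it is today. I will not pretend to produce a genuine proof.

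First I would catalogue the state of the art on exact algorithms for \textsc{$3$-Sat}. Every known algorithm --- from Monien--Speckenmeyer branching, through Paturi--Pudl\'ak--Zane and its PPSZ refinement, to Sch\"oning's random-walk variant and its derandomisations --- runs in time $\cO^*(c^n)$ for some constant $c>1$, and the best currently known $c$ is still strictly larger than one. None of these techniques approaches subexponential behaviour, which is exactly the content of $\ETH$. Any refutation would therefore have to depart substantially from the existing algorithmic toolkit.

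Second I would invoke the Sparsification Lemma of Impagliazzo, Paturi and Zane, which is genuinely a theorem and shows that $\ETH$ is equivalent to its sparse variant (runtime measured in the number of clauses rather than variables). This closes the most obvious loophole: one cannot hope to refute $\ETH$ by an algorithm tailored to dense instances, since any instance can be reduced in subexponential time to a bounded collection of sparse ones. Pairing the lemma with the algorithmic survey above gives the strongest form of circumstantial evidence currently available.

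The main obstacle is the same one that blocks all strong lower bounds against exponential-time computation: there is no known technique for proving that a specific $\NP$-complete problem admits no $\cO^*(2^{\varepsilon n})$-algorithm for \emph{every} $\varepsilon>0$, as such a technique would bypass the natural-proofs and relativisation barriers in the strongest possible way. Consequently, $\ETH$ must remain a conjecture, and its role in the present paper is precisely that of a benchmark complexity assumption from which conditional lower bounds --- such as the minimum-degree barrier for \textsc{Vertex Cover} kernelization discussed in the sequel --- can be derived.
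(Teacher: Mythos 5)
You have correctly recognized that the statement is a conjecture, not a theorem: the paper offers no proof of $\ETH$ and simply cites Impagliazzo--Paturi and Impagliazzo--Paturi--Zane as its source, using it throughout as a working hypothesis for conditional lower bounds. Your discussion of the algorithmic state of the art, the Sparsification Lemma, and the barriers to an unconditional proof is accurate and is the appropriate response to being asked to ``prove'' an open conjecture.
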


	\section{Standard Methods}\label{FellowsEtAlTwo_sec:standard:methods}
	In this section, we review some classic results in \textsc{Vertex Cover} kernelization. In particular, we discuss the Buss kernel~\cite{FellowsEtAlTwo_BG93} in \cref{FellowsEtAlTwo_sec:Buss}.  \Cref{FellowsEtAlTwo_sec:crown} is devoted to the kernel based on the notion of a crown decomposition \cite{FellowsEtAlTwo_CFJ04,FellowsEtAlTwo_Fel03} (see \cref{FellowsEtAlTwo_def:crown:dec}). A linear-programming-based kernel~\cite{FellowsEtAlTwo_CKJ01} is discussed in \cref{FellowsEtAlTwo_sec:lp:based}.
	
	We would like to remark that the technical parts of the expositions given in the remainder of this section are based on~\cite[Sections 2.2.1, 2.3 and 2.5]{FellowsEtAlTwo_CFK15} and we refer to this text for several details.
	
	\subsection{Buss Kernelization}\label{FellowsEtAlTwo_sec:Buss}
	The first kernel for \vc ~appeared several years before the notion of kernelization was formally introduced and is attributed to Buss~\cite{FellowsEtAlTwo_BG93}. It relies on two observations. The first one is that by definition, there is no need to include an isolated vertex in a vertex cover, as it does not have any incident edges that need to be covered.
	\begin{reduction}\label{FellowsEtAlTwo_reduction:isolated}
		If $G$ has an isolated vertex $v$, then reduce $(G, k)$ to $(G - v, k)$.
	\end{reduction}
	The second observation is that, if $G$ has a vertex $v$ of degree more than $k$, then we have no choice but to include $v$ in any size-$k$ vertex cover of $G$: If we did not include $v$, we would have to include all of its at least $k+1$ neighbors, exceeding the budget of $k$ vertices we are given. Hence, $G$ has a vertex cover of size $k$ if and only if $G - v$ has a vertex cover of size $k-1$, so we have observed that the following reduction rule is \emph{safe}, meaning that the original instance is a \textsc{Yes}-instance if and only if the reduced instance is a \textsc{Yes}-instance.
	\begin{reduction}\label{FellowsEtAlTwo_reduction:degree:k}
		If $G$ has a vertex $v$ with $\deg(v) > k$, then reduce $(G, k)$ to $(G - v, k - 1)$.
	\end{reduction}
	Now, after exhaustively applying \cref{FellowsEtAlTwo_reduction:degree:k}, $G$ has maximum degree at most $k$, so if $G$ contains more than $k^2$ edges, then we are dealing with a \textsc{No}-instance: It is not possible to cover more than $k^2$ edges with $k$ vertices of degree at most $k$. On the other hand, if $(G, k)$ is a \textsc{Yes}-instance, then $G$ has a vertex cover $X$ of size at most $k$. After exhaustively applying \cref{FellowsEtAlTwo_reduction:isolated}, $G$ does not contain any isolated vertices so we can assume that every vertex of $V(G) \setminus X$ has a neighbor in $X$. Since the maximum degree of $G$ is at most $k$, we can conclude that $\card{V(G) \setminus X} \le k^2$, which implies that $\card{V(G)} \le k^2 + k$. Hence, if $G$ has more than $k^2+k$ vertices, we can again conclude that we are dealing with a \textsc{No}-instance. Since \cref{FellowsEtAlTwo_reduction:isolated,FellowsEtAlTwo_reduction:degree:k} clearly run in polynomial time, we have the following theorem.
	\begin{theorem}[Buss and Goldsmith \cite{FellowsEtAlTwo_BG93}]
		\vc~admits a kernel with at most $k^2 + k$ vertices and $k^2$ edges.
	\end{theorem}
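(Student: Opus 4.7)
The plan is to apply \cref{FellowsEtAlTwo_reduction:isolated} and \cref{FellowsEtAlTwo_reduction:degree:k} exhaustively, both of which were just argued to be safe, and then to show that any graph on which neither rule fires is bounded in both its number of vertices and its number of edges (as long as the instance is still a \textsc{Yes}-instance). Since each application of \cref{FellowsEtAlTwo_reduction:degree:k} strictly decreases the parameter and each application of \cref{FellowsEtAlTwo_reduction:isolated} strictly decreases $\card{V(G)}$ without changing the parameter, and both triggers can be detected in linear time, the whole reduction procedure runs in polynomial time. I would write the instance resulting from exhaustive reduction as $(G, k)$ again; by safeness it is equivalent to the original one, and by construction $G$ has no isolated vertices and maximum degree at most $k$.

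From here I would establish the two size bounds separately. For the \emph{edge bound}, I would observe by double counting that any vertex cover $X$ of $G$ with $\card{X} \le k$ covers at most $\sum_{x \in X} \deg(x) \le k \cdot k = k^2$ edges; hence if $\card{E(G)} > k^2$ the instance must be a \textsc{No}-instance and can be replaced by any trivial \textsc{No}-instance of constant size. For the \emph{vertex bound}, I would fix a hypothetical cover $X$ of size at most $k$ and argue that every $v \in V(G)\setminus X$ lies in $N(X)$: since $v$ is not isolated it has at least one incident edge, and that edge must be covered by $X$, forcing $v$ to have a neighbor in $X$. Combined with $\deg(x) \le k$ for every $x \in X$, this gives $\card{V(G)\setminus X} \le \card{N(X)} \le k^2$ and therefore $\card{V(G)} \le k + k^2$.

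There is essentially no hard step here; the only bit of bookkeeping is that \cref{FellowsEtAlTwo_reduction:degree:k} modifies the parameter, so what the above estimates really bound is the size of the reduced instance in terms of the \emph{current} parameter $k'$, and one relies on $k' \le k$ to express the final bound in the form stated. Polynomial running time of the whole kernelization then follows from the fact that at most $n + k$ reductions are ever applied, each checkable in polynomial time, yielding the claimed kernel with at most $k^2$ edges and $k^2 + k$ vertices.
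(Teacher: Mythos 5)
Your proposal is correct and follows essentially the same argument as the paper: exhaustively apply \cref{FellowsEtAlTwo_reduction:isolated,FellowsEtAlTwo_reduction:degree:k}, bound the number of edges by $k^2$ via the degree-at-most-$k$ observation, and bound $\card{V(G)\setminus X}$ by $k^2$ using that every non-isolated vertex outside a hypothetical cover $X$ has a neighbor in $X$. Your extra remark about the parameter decreasing under \cref{FellowsEtAlTwo_reduction:degree:k} and relying on $k' \le k$ is a valid (if implicit in the paper) bookkeeping point.
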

	
	\subsection{Crown Reduction}\label{FellowsEtAlTwo_sec:crown}
	The key insight above was that any vertex of degree at least $k+1$ has to be contained in any size-$k$ vertex cover of a graph. The kernel we present in this section follows a similar motivation. The goal is to identify a set of vertices that we can always \emph{assume} to be contained in a size-$k$ vertex cover of a graph. In other words, we want to find a set of vertices $S$, such that if $G$ contains a vertex cover of size $k$ then $G$ contains a vertex cover of size $k$ that contains $S$. The process of identifying such a set $S$ is based on a structural decomposition of the input graph, called the \emph{crown decomposition}.
	Formally, a crown decomposition is defined as follows and we illustrate it in \cref{FellowsEtAlTwo_fig:crown:dec}.
	\begin{definition}[Crown Decomposition]\label{FellowsEtAlTwo_def:crown:dec}
		Let $G$ be a graph. A \emph{crown decomposition} of $G$ is a partition $(C, H, B)$ of $V(G)$, where $C$ is called the \emph{crown}, $H$ the \emph{head} and $B$ the \emph{body}, such that the following hold.
		\begin{enumerate}[label=(\roman*)]
			\item $C$ is a non-empty independent set in $G$.\label{FellowsEtAlTwo_def:crown:dec:i:s}
			\item There are no edges between vertices in $C$ and vertices in $B$.\label{FellowsEtAlTwo_def:crown:dec:sep}
			\item $G[C, H]$ contains a matching that saturates $H$.\label{FellowsEtAlTwo_def:crown:dec:matching}
		\end{enumerate}
	\end{definition}
	\begin{figure}[t]
		\centering
		\includegraphics[height=.135\textheight]{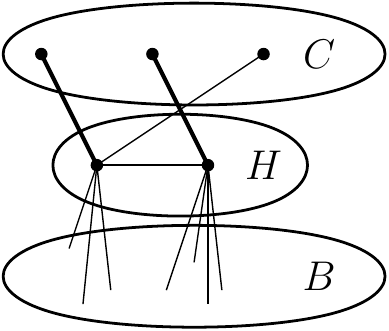}
		\caption{Illustration of a crown decomposition (\cref{FellowsEtAlTwo_def:crown:dec}). The bold edges in $G[C, H]$ show a matching saturating $H$.}
		\label{FellowsEtAlTwo_fig:crown:dec}
	\end{figure}
	The motivation for using the above definition in \vc~kernelization is as follows. Suppose we are given a crown decomposition $(C, H, B)$ of $G$ and consider the bipartite graph $G[C, H]$. Clearly, any vertex cover of $G$ has to cover the edges in $G[C \cup H]$. However, by~\cref{FellowsEtAlTwo_def:crown:dec:matching} we know that there is a matching in $G[C, H]$ saturating $H$, hence any vertex cover of $G[H, C]$ has size at least $\card{H}$. On the other hand, $H$ is a vertex cover of $G[C, H]$ and since $C$ is independent by~\cref{FellowsEtAlTwo_def:crown:dec:i:s}, of $G[C \cup H]$. This allows us to conclude that $G$ has a vertex cover of size $k$ if and only if $G - (C \cup H)$ has a vertex cover of size $k - \card{H}$. Hence, the following reduction rule is safe.
	\begin{reduction}\label{FellowsEtAlTwo_reduction:crown}
		If $G$ has a crown decomposition $(C, H, B)$, then reduce $(G, k)$ to $(G - (C \cup H), k - \card{H})$.
	\end{reduction}
	However, two questions remain. Namely whether we can find a crown decomposition of a graph in polynomial time and how to obtain the linear bound on the number of vertices in the resulting kernel. Both questions are answered by the following lemma whose proof is based on classic results in graph theory by K\"{o}nig~\cite{FellowsEtAlTwo_Koe16} and Hall~\cite{FellowsEtAlTwo_Hal35}, and polynomial-time algorithms for bipartite matching such as the classic algorithm due to Hopcroft and Karp~\cite{FellowsEtAlTwo_HK73}.\footnote{For a more fine-grained analysis one could apply the faster algorithm~\cite{FellowsEtAlTwo_MS04}.}
	\begin{lemma}[Lemma 2.14 in~\cite{FellowsEtAlTwo_CFK15} based on~\cite{FellowsEtAlTwo_CFJ04}]\label{FellowsEtAlTwo_lem:crown}
		Let $G$ be a graph on at least $3k+1$ vertices. There is a polynomial-time algorithm that either
		\begin{enumerate}[label=\arabic*)]
			\item finds a matching of size at least $k+1$ in $G$; or\label{FellowsEtAlTwo_lem:crown:matching}
			\item finds a crown decomposition of $G$.\label{FellowsEtAlTwo_lem:crown:decomposition}
		\end{enumerate}
	\end{lemma}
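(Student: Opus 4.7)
The plan is to begin by computing a maximum matching $M$ in $G$ using a polynomial-time algorithm such as Hopcroft--Karp. If $|M|\ge k+1$, we return $M$ and are done. Otherwise $|M|\le k$, so the set $I \defeq V(G)\setminus V(M)$ of unmatched vertices has at least $(3k+1)-2k = k+1$ elements; moreover, $I$ is independent, because any edge inside $I$ could be added to $M$, contradicting maximality. This already handles case~\ref{FellowsEtAlTwo_lem:crown:matching} and sets up the candidate for the crown: a large independent set $I$ whose entire neighborhood lies in $V(M)$.

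Next, I would look inside the bipartite graph $G[I,N(I)]$. Compute a maximum matching $M_{IH}$ there and, from it, extract a minimum vertex cover $V_{IH}$ using the constructive side of K\"onig's theorem (built from alternating paths starting at unsaturated vertices of one side). Define $C \defeq I\setminus V_{IH}$ and $H \defeq N(C)$, and set $B \defeq V(G)\setminus (C\cup H)$. By construction $C$ is independent (as a subset of $I$), and every edge incident to $C$ goes to $V_{IH}\cap N(I)$, hence $H\subseteq V_{IH}\cap N(I)$ and there are no edges between $C$ and $B$, giving conditions \ref{FellowsEtAlTwo_def:crown:dec:i:s} and \ref{FellowsEtAlTwo_def:crown:dec:sep} of \cref{FellowsEtAlTwo_def:crown:dec}. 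Non-emptiness of $C$ follows from $|I|\ge k+1$ together with $|V_{IH}\cap I|\le |V_{IH}|=|M_{IH}|\le |M|\le k$.

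The step I expect to require the most care is verifying \ref{FellowsEtAlTwo_def:crown:dec:matching}, i.e.\ producing a matching in $G[C,H]$ that saturates $H$. The cleanest route is via Hall's theorem applied to $H$ inside $G[C,H]$. Suppose toward contradiction that some $S\subseteq H$ has $|N_{G[C,H]}(S)|<|S|$; since $N_{G[C,H]}(S)\subseteq C\subseteq I$ and every edge of $G[I,N(I)]$ incident to $S$ has its other endpoint in $N_{G[C,H]}(S)$ (because the only edges from $H$ that leave $V_{IH}$ go to $C$), the set $(V_{IH}\setminus S)\cup N_{G[C,H]}(S)$ would be a strictly smaller vertex cover of $G[I,N(I)]$, contradicting the minimality of $V_{IH}$. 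Hence Hall's condition holds, and a saturating matching can actually be found in polynomial time by one more call to a bipartite matching algorithm on $G[C,H]$.

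Putting these pieces together yields a polynomial-time algorithm that outputs either a matching of size $k+1$ or a valid crown decomposition $(C,H,B)$, establishing the lemma. All ingredients — maximum bipartite matching, the constructive K\"onig step, and the Hall-type verification — run in polynomial time, so the overall procedure does as well.
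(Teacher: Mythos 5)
Your proof is correct and follows essentially the same route as the standard argument behind the cited Lemma~2.14 of~\cite{FellowsEtAlTwo_CFK15}: an initial matching on $G$, then a bipartite maximum matching and K\"{o}nig minimum vertex cover of $G[I,N(I)]$, and finally the crown extracted from the part of $I$ outside that cover. One small imprecision worth fixing: the parenthetical ``every edge of $G[I,N(I)]$ incident to $S$ has its other endpoint in $N_{G[C,H]}(S)$'' is not literally true, since a vertex of $S\subseteq H$ may also have neighbors in $I\cap V_{IH}$; those edges are still covered by $V_{IH}\setminus S$, so $(V_{IH}\setminus S)\cup N_{G[C,H]}(S)$ is indeed a vertex cover and the contradiction goes through, but the justification as written does not account for them. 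You could also bypass the Hall detour entirely: since $|M_{IH}|=|V_{IH}|$, each edge of $M_{IH}$ has exactly one endpoint in $V_{IH}$ and $M_{IH}$ saturates $V_{IH}$, so every $h\in H\subseteq V_{IH}$ is matched by $M_{IH}$ to a vertex of $I\setminus V_{IH}=C$, which yields the saturating matching in $G[C,H]$ directly with no further computation.
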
	
	Now, in Case~\cref{FellowsEtAlTwo_lem:crown:matching} we can immediately conclude that $(G, k)$ is a \textsc{No}-instance and in Case~\cref{FellowsEtAlTwo_lem:crown:decomposition} we can apply \cref{FellowsEtAlTwo_reduction:crown}. By an exhaustive application of \cref{FellowsEtAlTwo_lem:crown} in combination with \cref{FellowsEtAlTwo_reduction:crown} (and \cref{FellowsEtAlTwo_reduction:isolated} to get rid of isolated vertices), we have the following theorem.
	\begin{theorem}[Chor et al.\ \cite{FellowsEtAlTwo_CFJ04}]
		\vc~admits a kernel with at most $3k$ vertices.
	\end{theorem}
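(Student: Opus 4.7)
The plan is to assemble the theorem from the pieces already in place: \cref{FellowsEtAlTwo_reduction:isolated}, \cref{FellowsEtAlTwo_reduction:crown}, and \cref{FellowsEtAlTwo_lem:crown}. The kernelization algorithm will be an outer loop that repeatedly tries to shrink the instance until at most $3k$ vertices remain.

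The outer loop proceeds as follows. First, I would exhaustively apply \cref{FellowsEtAlTwo_reduction:isolated} to kill off all isolated vertices. If the resulting graph already has at most $3k$ vertices, I output $(G,k)$ and stop. Otherwise $\card{V(G)} \ge 3k+1$, so I invoke \cref{FellowsEtAlTwo_lem:crown}. In outcome \cref{FellowsEtAlTwo_lem:crown:matching} the graph contains a matching of size $k+1$; since any vertex cover must pick at least one endpoint of each matched edge, the vertex cover number is at least $k+1$, so $(G,k)$ is a \textsc{No}-instance and I return a trivial \textsc{No}-instance (for example the graph $K_2$ with parameter $0$). In outcome \cref{FellowsEtAlTwo_lem:crown:decomposition} I receive a crown decomposition $(C,H,B)$ and apply \cref{FellowsEtAlTwo_reduction:crown}, replacing $(G,k)$ by $(G-(C\cup H), k-\card{H})$. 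I then repeat the loop on the new instance.

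Correctness is immediate from the safety of the three reductions and from the matching lower bound used in the \textsc{No}-instance case. For the size bound, the loop only terminates normally when $\card{V(G)}\le 3k$, so the output kernel satisfies the claimed bound. For running time, I need to argue polynomial termination: the key point is that every iteration strictly decreases $\card{V(G)}$. Indeed, after \cref{FellowsEtAlTwo_reduction:isolated} is applied exhaustively, $G$ has no isolated vertices, so any crown $C$ supplied by the lemma has at least one neighbour; combined with condition \cref{FellowsEtAlTwo_def:crown:dec:sep} this forces $H\neq\emptyset$, so $\card{C\cup H}\ge 2$ and \cref{FellowsEtAlTwo_reduction:crown} deletes at least two vertices. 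Since \cref{FellowsEtAlTwo_lem:crown} and both reductions run in polynomial time, and there are at most $\cO(n)$ iterations, the whole procedure runs in polynomial time.

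The only real subtlety — and the main potential obstacle — is the book-keeping around the \textsc{No}-instance branch and the possibility that \cref{FellowsEtAlTwo_reduction:crown} produces a negative parameter $k-\card{H}<0$. In either case the safety argument guarantees that we may output a fixed trivial \textsc{No}-instance instead, and this is what keeps the algorithm a proper kernelization. Everything else is a direct concatenation of the lemmas already available in this section, so no new combinatorial ideas are needed.
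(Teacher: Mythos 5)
Your argument is exactly the paper's: exhaustively apply \cref{FellowsEtAlTwo_reduction:isolated}, then while more than $3k$ vertices remain invoke \cref{FellowsEtAlTwo_lem:crown}, terminating with a \textsc{No}-answer in outcome \cref{FellowsEtAlTwo_lem:crown:matching} and applying \cref{FellowsEtAlTwo_reduction:crown} in outcome \cref{FellowsEtAlTwo_lem:crown:decomposition}. You additionally spell out the (genuinely needed, but routine) bookkeeping of termination — that $C\cup H$ is non-empty because $G$ has no isolated vertices and condition \cref{FellowsEtAlTwo_def:crown:dec:sep} then forces $H\neq\emptyset$ — and of the negative-parameter corner case, both of which the paper leaves implicit.
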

	We would like to remark that recently, a kernel on $2k$ vertices that only uses crown decomposition was obtained~\cite{FellowsEtAlTwo_LS18}.
	
	\subsection{LP-Based Kernel}\label{FellowsEtAlTwo_sec:lp:based}
	The \vc ~problem is one of many $\NP$-hard problems that can be expressed as an integer linear program~\cite{FellowsEtAlTwo_Sch98}, a fact which is commonly exploited in the field of approximation algorithms~\cite{FellowsEtAlTwo_Vaz03}. In this section, we show how to use linear programming to obtain a kernel for \textsc{Vertex Cover} on at most $2k$ vertices. We first recall how to formulate \vc ~as an integer linear program.
	
	For each vertex $v \in V(G)$, we introduce a variable $x_v \in \{0, 1\}$ with the interpretation that $x_v = 1$ if and only if the vertex $v$ is included in the vertex cover witnessed by a solution to the (integer) linear program. We can then formulate the constraints in a natural way, directly applying the definition of vertex covers: For each edge $uv \in E(G)$, the requirement that at least one of $u$ and $v$ has to be contained in the solution translates to the constraint $x_u + x_v \ge 1$. Since we are looking for a vertex cover of minimum size, the objective function minimizes the sum over all $x_v$'s. 
	\begin{align}
		&\min \sum_{v \in V(G)}x_v \nonumber \\
		\mbox{subject to~~~} &x_u + x_v \ge 1 ~~~\forall uv \in E(G) \label{FellowsEtAlTwo_eq:lp:edges} \\
		&x_v \in \{0, 1\} ~~~~\forall v \in V(G) \label{FellowsEtAlTwo_eq:lp:integer}
	\end{align}
	To make the program feasible to compute, we relax the integrality constraints \cref{FellowsEtAlTwo_eq:lp:integer} to $x_v \in \bR$, $x_v \ge 0$. (Note that we can drop the constraints $x_v \le 1$ since the objective function is a minimization.) The resulting linear program is solvable in polynomial time, but may not always return a feasible solution for the original \textsc{Vertex Cover} instance. However, we are chasing a different goal here, a kernelization algorithm.
	
	Given an optimal solution $(x_v)_{v \in V(G)}$ of the (relaxed) linear program, we define the sets 
		$V_0 \defeq \{v \in V(G) \mid x_v < \frac{1}{2}\}$,
		$V_1 \defeq \{v \in V(G) \mid x_v > \frac{1}{2}\}$, and 
		$V_{\frac{1}{2}} \defeq \{v \in V(G) \mid x_v = \frac{1}{2}\}$.
	The key ingredient is the following theorem due to Nemhauser and Trotter~\cite{FellowsEtAlTwo_NT74}.
	\begin{theorem}[Nemhauser and Trotter \cite{FellowsEtAlTwo_NT74}]\label{FellowsEtAlTwo_thm:nem:trott}
		There is a minimum vertex cover $X$ of $G$ such that $V_1 \subseteq X \subseteq V_1 \cup V_{\frac{1}{2}}$.
	\end{theorem}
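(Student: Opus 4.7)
The plan is to start from an arbitrary minimum vertex cover $X^*$ of $G$ and modify it into a new vertex cover $X$ that satisfies the containment $V_1 \subseteq X \subseteq V_1 \cup V_{\frac{1}{2}}$ without increasing its size. The natural candidate is $X \defeq (X^* \setminus V_0) \cup V_1$, i.e., drop from $X^*$ everything that the LP solution assigned value $<\tfrac{1}{2}$, and add in everything that the LP solution assigned value $>\tfrac{1}{2}$. It is then immediate that $V_1 \subseteq X$ and $X \cap V_0 = \emptyset$, so it remains to verify that $X$ is a vertex cover and that $\card{X} \le \card{X^*}$.

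For the vertex-cover property, I would pick an arbitrary edge $uv \in E(G)$ and argue by cases on where $u$ and $v$ sit in $V_0 \cup V_{\frac{1}{2}} \cup V_1$. The LP-constraint $x_u + x_v \ge 1$ forces that at most one of the two endpoints lies in $V_0$, and if (say) $u \in V_0$ then necessarily $v \in V_1$. So every edge either stays covered by a vertex of $X^* \setminus V_0$, or else has an endpoint in $V_1$, which has just been added.

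The main obstacle is bounding the size. Let $A \defeq X^* \cap V_0$ (vertices we dropped) and $B \defeq V_1 \setminus X^*$ (vertices we added); we need $\card{B} \le \card{A}$. I would argue this by an LP-perturbation in the spirit of an exchange argument: pick a small $\varepsilon > 0$, increase $x_u$ by $\varepsilon$ for every $u \in A$ and decrease $x_v$ by $\varepsilon$ for every $v \in B$, leaving the remaining variables untouched. The change in the objective is $\varepsilon(\card{A} - \card{B})$, so if $\card{B} > \card{A}$ we obtain a feasible solution of strictly smaller value, contradicting optimality of $(x_v)_{v \in V(G)}$. The delicate part is therefore verifying feasibility of the perturbed solution, which I would do edgewise, again splitting on the membership of the endpoints in $V_0$, $V_{\frac{1}{2}}$, $V_1$. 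The tight cases are edges with one endpoint in $B$ and the other in $V_{\frac{1}{2}}$ or $V_1 \cap X^*$ (where a strict slack $x_u + x_v > 1$ in the original solution lets us absorb the $-\varepsilon$ for small enough $\varepsilon$), and edges with one endpoint in $A$ (where the analysis above shows the other endpoint must lie in $V_1$, making the constraint slack or even improved after perturbation). Once feasibility is established, optimality of the LP gives $\card{B} \le \card{A}$, so $\card{X} \le \card{X^*}$, and combined with minimality of $X^*$ this forces $X$ to be a minimum vertex cover.
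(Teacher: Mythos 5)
The paper does not actually prove this theorem --- it is cited directly from Nemhauser and Trotter (1974) and used as a black box to derive \cref{FellowsEtAlTwo_reduction:lp} --- so there is no in-paper proof to compare against. Your proof, however, is correct and is essentially the standard argument for the half-integrality/persistency part of the Nemhauser--Trotter theorem: replace $X^*$ by $X=(X^*\setminus V_0)\cup V_1$, verify coverage via the LP constraints, and bound $\card{X}$ by the LP-perturbation (shift $\varepsilon$ from $B=V_1\setminus X^*$ to $A=X^*\cap V_0$) contradicting optimality if $\card B>\card A$. The case analysis you sketch does go through, with the two small observations worth spelling out: there are no edges inside $V_0$ (so none inside $A$) because the LP constraint would be violated, and there are no edges inside $B$ because both endpoints would lie outside the vertex cover $X^*$; the remaining ``tight'' edges incident to $B$ have the other endpoint in $A$ (net zero change) or strictly positive slack, so a uniform $\varepsilon$ less than half the minimum slack works.
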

	We derive a reduction rule from \cref{FellowsEtAlTwo_thm:nem:trott}. First, we note that in any \textsc{Yes}-instance of \vc, $\sum_{v \in V(G)} x_v \le k$. Furthermore, let $X$ be a vertex cover of $G$ of size $k$ with $V_1 \subseteq X$ and $X \cap V_0 = \emptyset$ (whose existence is guaranteed by \cref{FellowsEtAlTwo_thm:nem:trott}), then $X \setminus V_1$ is a vertex cover of $G - (V_0 \cup V_1)$ of size $k - \card{V_1}$. Conversely, if $G - (V_0 \cup V_1)$ has a vertex cover $X'$ of size $k'$, we observe that by the constraints \cref{FellowsEtAlTwo_eq:lp:edges}, for any edge $vw \in E(G)$ with $v \in V_0$, we have that $w \in V_1$. Hence, $X' \cup V_1$ is a vertex cover of $G$ of size $k' + \card{V_1}$. We have argued that the following reduction rule is safe.
	\begin{reduction}\label{FellowsEtAlTwo_reduction:lp}
		Let $(x_v)_{v \in V(G)}$, $V_0$, $V_{\frac{1}{2}}$ and $V_1$ be as above. If $\sum_{v \in V(G)} x_v > k$, then conclude that we are dealing with a \textsc{No}-instance. Otherwise, reduce $(G, k)$ to $(G - V_0 \cup V_1, k - \card{V_1})$.
	\end{reduction}
	The number of vertices in the reduced instance after applying \cref{FellowsEtAlTwo_reduction:lp} is
	\begin{align*}
		\card{V(G) \setminus (V_0 \cup V_1)} = \card{V_\frac{1}{2}} = \sum\nolimits_{v \in V_{\frac{1}{2}}} 2x_v  \le 2 \cdot \sum\nolimits_{v \in V(G)} x_v \le 2k,
	\end{align*}	
	so we have obtained the following kernel for \vc.\footnote{We would like to remark that while \textsc{Linear Programming} can be solved in polynomial time (and hence our reduction runs in polynomial time), the corresponding algorithms are often slow in practice. However, for the case of \vc ~there is good news: One can show that a solution of the above linear program can be found via a reduction to \textsc{Bipartite Matching} (see, e.g., \cite[Section 2.5]{FellowsEtAlTwo_CFK15}) which has fast practical algorithms.}
	\begin{theorem}[Chen et al.\ \cite{FellowsEtAlTwo_CKJ01}]
		\vc~admits a kernel with at most $2k$ vertices.
	\end{theorem}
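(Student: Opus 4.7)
The plan is to verify the correctness of \cref{FellowsEtAlTwo_reduction:lp} in detail and then combine it with the size bound already sketched after the reduction statement. First, I would solve the LP-relaxation in polynomial time, obtaining an optimal fractional solution $(x_v)_{v \in V(G)}$ and thus the partition $V_0, V_{1/2}, V_1$. If $\sum_v x_v > k$, the LP-optimum already exceeds the integer optimum target, so $(G,k)$ is a \textsc{No}-instance; otherwise I proceed with the reduction.

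Next, I would establish safety of the reduction in both directions. For the forward direction, \cref{FellowsEtAlTwo_thm:nem:trott} guarantees a minimum vertex cover $X$ of $G$ satisfying $V_1 \subseteq X \subseteq V_1 \cup V_{1/2}$; then $X \setminus V_1$ is a vertex cover of $G - (V_0 \cup V_1)$ of size at most $k - |V_1|$. For the backward direction, given a vertex cover $X'$ of the reduced graph of size at most $k - |V_1|$, I would verify that $X' \cup V_1$ covers every edge of $G$. Edges contained in $V_{1/2}$ are covered by $X'$, and edges with an endpoint in $V_1$ are trivially covered; the remaining case is an edge $vw$ with $v \in V_0$, for which the LP-constraint \cref{FellowsEtAlTwo_eq:lp:edges} combined with $x_v < \frac{1}{2}$ forces $x_w > \frac{1}{2}$, hence $w \in V_1$.

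Finally, the size bound follows from the displayed calculation $|V_{1/2}| = \sum_{v \in V_{1/2}} 2x_v \le 2 \sum_{v \in V(G)} x_v \le 2k$, using the LP-optimum bound established in the first step. Since solving the LP (or, in view of the footnote, a bipartite matching) and performing the reduction are polynomial-time, the algorithm outputs a kernel on at most $2k$ vertices.

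The main obstacle, and really the only non-bookkeeping step, is the backward safety argument. The subtle point is ruling out edges with both endpoints in $V_0$ and, more generally, showing that every edge leaving $V_0$ lands in $V_1$: this is precisely what LP-feasibility delivers, and it is what makes the fractional partition align so cleanly with the integer cover guaranteed by \cref{FellowsEtAlTwo_thm:nem:trott}.
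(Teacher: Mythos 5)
Your proposal is correct and follows essentially the same route as the paper: the LP relaxation, the partition into $V_0$, $V_{1/2}$, $V_1$, the Nemhauser--Trotter theorem for the forward direction of safety, the LP feasibility constraints to show edges leaving $V_0$ must land in $V_1$ for the backward direction, and the same closing size calculation. You have merely spelled out the case analysis in the backward direction a bit more explicitly than the paper does, which is a welcome clarification but not a different argument.
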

	
	\section{Towards the Barrier -- What is the Maximum Minimum Vertex Degree of the Kernel that Can be Achieved in Polynomial Time?}\label{FellowsEtAlTwo_sec:small:degree}	
	In the previous section, we have seen that by \cref{FellowsEtAlTwo_reduction:degree:k}  we can kernelize all vertices whose degree is larger than the target value $k$ of the given vertex cover instance. Hence,  after applying this rule exhaustively there will be no vertex of degree larger than $k$ in the kernelized instance. But what about vertices of small degree? Vertices of degree zero, i.e., isolated vertices, can be removed from a \vc ~instance according to \cref{FellowsEtAlTwo_reduction:isolated}. Furthermore, we will see below that there are fairly simple reduction rules that kernelize vertices of degree one and two (see \cref{FellowsEtAlTwo_reduction:pendant:edge,FellowsEtAlTwo_reduction:degree:two}). 
	A natural question arises: Can this process go on `forever', i.e., can we, for any fixed constant $d \in \bN$, give a reduction rule that kernelizes all vertices of degree $d$ from a given \vc ~instance?
	
	The answer to this question is \emph{probably not} --- even if the degree of the polynomial in the runtime of the kernelization algorithm can depend on $d$: It is well-known (see, e.g., \cite{FellowsEtAlTwo_CFK15,FellowsEtAlTwo_DF13,FellowsEtAlTwo_Fel14}) that unless $\ETH$ fails, there is some barrier constant $\barrierconstant > 0$ such that the fastest possible algorithm for \textsc{Vertex Cover} runs in time $(1 + \barrierconstant)^k \cdot n^{\cO(1)}$. If we could kernelize \vc ~in polynomial time to arbitrarily large minimum degree, one could devise a straightforward branching algorithm that runs in time $(1 + \barrierconstant - \varepsilon)^k \cdot n^{\cO(1)}$, for some $0 < \varepsilon < \barrierconstant$, where $\varepsilon$ can be arbitrarily close to the value of $\barrierconstant$. We coin the corresponding integer $\barrierdegree \in \bN$ the \emph{barrier degree} of \vc ~kernelization and now prove formally its existence (assuming $\ETH$).
	
	\begin{proposition}\label{FellowsEtAlTwo_prop:barrier:degree}
		Unless $\ETH$ fails, there is some constant $\barrierdegree \in \bN$, such that \textsc{Vertex Cover} cannot be kernelized to instances of minimum degree $\barrierdegree$.
	\end{proposition}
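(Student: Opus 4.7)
The plan is to argue by contrapositive: if \vc{} admitted polynomial-time kernelizations to arbitrarily large minimum degree, then an interleaved kernelize-and-branch algorithm would beat the $\ETH$-based running-time barrier that underlies the definition of $\barrierconstant$. Recall the standard consequence of $\ETH$ (via the sparsification lemma and a linear-size reduction from $\textsc{3-Sat}$ to \vc{}) that there exists an absolute constant $\barrierconstant>0$ for which no algorithm solves \vc{} in time $(1+\barrierconstant)^k\cdot n^{\cO(1)}$; this is precisely the existence statement that allows $\barrierconstant$ to be defined in the first place.

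Assume, towards a contradiction, that for every $d\in\bN$ there is a polynomial-time kernelization of \vc{} producing an equivalent instance of minimum degree at least $d$. Given an input $(G,k)$, I would fix a sufficiently large constant $d$ (chosen below) and recursively process the instance as follows: (i) apply the assumed kernelization, obtaining an equivalent instance of minimum degree at least $d$; (ii) pick any remaining vertex $v$; (iii) branch on whether $v$ is in the vertex cover, recursing on $(G-v,k-1)$, or whether $N(v)$ is in the vertex cover, recursing on $(G-N[v],k-\deg(v))$. Since $\deg(v)\ge d$ at every node of the search tree, the resulting recurrence is $T(k)\le T(k-1)+T(k-d)$, with polynomial work per node. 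The overall running time is therefore $\alpha_d^k\cdot n^{\cO(1)}$, where $\alpha_d$ is the unique positive real root of $x^d=x^{d-1}+1$, equivalently of $\alpha_d-1=\alpha_d^{-(d-1)}$.

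To close the contradiction I would verify the classical fact that $\alpha_d\to 1$ as $d\to\infty$: the identity $\alpha_d-1=1/\alpha_d^{d-1}$ and the trivial bound $\alpha_d\le 2$ together force $\alpha_d-1\to 0$ (a short computation gives the sharper $\alpha_d\le 1+\cO((\log d)/d)$). In particular, for all sufficiently large $d$ one has $\alpha_d<1+\barrierconstant$, giving an algorithm for \vc{} of running time strictly below $(1+\barrierconstant)^k\cdot n^{\cO(1)}$, contradicting the $\ETH$-based barrier. Taking $\barrierdegree$ to be the smallest $d$ for which this contradiction is triggered yields the desired constant.

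The main place where I would expect to take care is the precise invocation of the $\ETH$ consequence: one needs the ``constant-base'' formulation that gives a fixed $\barrierconstant>0$ under which beating $(1+\barrierconstant)^k\cdot n^{\cO(1)}$ is impossible, rather than merely the weaker $2^{o(k)}\cdot n^{\cO(1)}$ lower bound directly from the Sparsification Lemma. Aside from this, everything is routine: the polynomial degree of both the kernelization and the per-node work is permitted to depend on $d$, and this dependence is harmless because $d=\barrierdegree$ is fixed once and for all before the running-time analysis begins.
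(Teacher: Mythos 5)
Your proposal is correct and follows essentially the same strategy as the paper: assume the kernelization exists for every $d$, interleave it with the standard two-way branching on a high-degree vertex to obtain the recurrence $T(k)\le T(k-1)+T(k-d)$ with characteristic root $\alpha_d$, observe $\alpha_d\to 1$ as $d\to\infty$, and conclude that for a large enough fixed $d$ the resulting base drops below $1+\barrierconstant$, contradicting the $\ETH$-based running-time barrier for \vc. (The paper works with the equivalent inequality $\lambda\le\lambda^d(\lambda-1)$ and makes the threshold explicit as $d\ge 1-\log\varepsilon/\log(1+\varepsilon)$.) One small slip in your justification of $\alpha_d\to 1$: from $\alpha_d-1=\alpha_d^{-(d-1)}$, the upper bound $\alpha_d\le 2$ gives a \emph{lower} bound $\alpha_d-1\ge 2^{-(d-1)}$, not the desired upper bound; the correct argument is that if $\alpha_d\ge 1+\varepsilon_0$ along some subsequence, then $\alpha_d^{d-1}\ge(1+\varepsilon_0)^{d-1}\to\infty$, so $\alpha_d-1\to 0$, a contradiction. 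This is easily repaired and does not affect the conclusion.
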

	\begin{proof}
		Using standard arguments about branching algorithms (see, e.g., \cite[Chapter 3]{FellowsEtAlTwo_CFK15}) one can show that there is an algorithm solving vertex cover in time $\lambda^k \cdot n^{\cO(1)}$, where $\lambda$ satisfies
		\begin{align}
			\lambda \le \lambda^d(\lambda-1), \label{FellowsEtAlTwo_eq:branching:runtime}
		\end{align}
		if the input graph always has a vertex of degree at least $d$ to branch on. Now suppose that the statement of the proposition is false, then we can guarantee the existence of such a vertex for constant but arbitrarily large $d$ (with only polynomial time overhead at each stage of the branching). Now let $\varepsilon > 0$ with $\varepsilon < \barrierconstant$. (Note that this implies that $\varepsilon < 1$ as $\barrierconstant < 0.2738$~\cite{FellowsEtAlTwo_CKX10}.) We substitute $\lambda$ with $(1 + \varepsilon)$ in \cref{FellowsEtAlTwo_eq:branching:runtime} and obtain:
		\begin{align*}
			1 + \varepsilon &\le (1 + \varepsilon)^d \varepsilon \iff (1+\varepsilon)^d \ge \frac{1+\varepsilon}{\varepsilon} \iff d \log(1+ \varepsilon) \ge \log\left(\frac{1+ \varepsilon}{\varepsilon}\right) \\
			\iff d &\ge \frac{\log\left(\frac{1+ \varepsilon}{\varepsilon}\right)}{\log(1+\varepsilon)} = \frac{\log(1+\varepsilon) - \log(\varepsilon)}{\log(1+\varepsilon)} = 1 - \frac{\log(\varepsilon)}{\log(1+\varepsilon)}.
		\end{align*}
		This shows that for any such $\varepsilon$, there is a constant $d_\varepsilon \in \bN$ such that, if we could kernelize \textsc{Vertex Cover} to minimum degree $d_\varepsilon$, then we could solve it in $(1 + \varepsilon)^k \cdot n^{\cO(1)}$ time, where $\varepsilon < \barrierconstant$ by our choice. This contradicts $\ETH$ by, e.g., \cite[Theorem 1]{FellowsEtAlTwo_Fel14}.
    \qed
	\end{proof}
	
	\begin{table}
		\centering
		\begin{tabular}{c||c|c|c|c|c|c}
			Chen et al.~\cite{FellowsEtAlTwo_CKX10} & $d = 5$ & $d = 6$ & $d = 7$ & $d = 10$ & $d = 25$ & $d = 100$ \\
			\hline 
			$1.2738^k$ & $1.3247^k$ & $1.2852^k$ & $1.2555^k$ & $1.1975^k$ & $1.1005^k$ & $1.0346^k$ 
		\end{tabular}
		\caption{(Dependence on $k$ of the) runtime of the resulting simple branching $\FPT$-algorithm when using a kernelization algorithm to minimum degree $d$, for several values of $d$, versus the current fastest known $\FPT$-algorithm for \vc~\cite{FellowsEtAlTwo_CKX10}.}
		\label{FellowsEtAlTwo_tab:kernel:degree:fpt}
	\end{table}	
	The proof of \cref{FellowsEtAlTwo_prop:barrier:degree} also provides some very natural motivation for the question of kernelizing \textsc{Vertex Cover} to larger and larger minimum degree; such kernels immediately provide new $\FPT$-algorithms for the problem. In particular, kernelizing to minimum degree seven would already improve upon the current best known algorithm for \textsc{Vertex Cover}, yielding first progress in a very attractive research question in over a decade! We illustrate the runtime of such algorithms for several concrete values of $d$ in \cref{FellowsEtAlTwo_tab:kernel:degree:fpt}.
		
	In the remainder of this section, we present a set of reduction rules that were first observed by Fellows and Stege~\cite{FellowsEtAlTwo_FS99} to kernelize a vertex cover instance to minimum degree `almost five', in the following sense: We show that a vertex can be kernelized if its degree is at most three or its degree is four and there are more than two edges between the vertices in its neighborhood.
	
	Before we give the reduction rules to kernelize vertices of degree one and two, we would like to remark that later in the text, we introduce	two \emph{auxiliary} reduction rules, mostly to deal with structures arising in the kernelization of vertices of degree three and four 
	which as a byproduct also kernelize degree one and two vertices. 
	For explanatory purposes, however, we describe the reduction rules for vertices of degree one and two separately first.
	
	\begin{reduction}\label{FellowsEtAlTwo_reduction:pendant:edge}
		If $G$ has a pendant edge $\{u, v\}$ with $\deg(u) = 1$, then reduce $(G, k)$ to $(G - \{u, v\}, k-1)$.
	\end{reduction}	
	\begin{proposition}
		\Cref{FellowsEtAlTwo_reduction:pendant:edge} is safe, i.e., if $G$ has a pendant edge $\{u, v\}$ with $\deg(u) = 1$, then $G$ has a vertex cover of size $k$ if and only if $G - \{u, v\}$ has a vertex cover of size $k-1$.
	\end{proposition}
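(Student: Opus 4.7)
The plan is to prove both implications of the biconditional separately, with the forward direction requiring a small exchange argument and the backward direction being essentially immediate.

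For the forward direction, suppose $G$ has a vertex cover $X$ of size $k$. Since $\{u,v\} \in E(G)$, at least one of $u,v$ lies in $X$. I would first argue that without loss of generality $v \in X$: if $u \in X$ but $v \notin X$, then the set $X' \defeq (X \setminus \{u\}) \cup \{v\}$ has the same cardinality as $X$ and is still a vertex cover, because the only edge incident to $u$ in $G$ is $\{u,v\}$ (as $\deg(u)=1$), and that edge is covered by $v \in X'$; every other edge in $E(G)$ was already covered by a vertex in $X \setminus \{u\} \subseteq X'$. Hence I may assume $v \in X$. Then $X \setminus \{v\}$ is a vertex cover of $G - \{u,v\}$ of size at most $k-1$, since every edge of $G - \{u,v\}$ is an edge of $G$ not incident to $u$ or $v$, so it was already covered by some vertex of $X$ different from $u$ and $v$.

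For the backward direction, suppose $G - \{u,v\}$ has a vertex cover $Y$ of size $k-1$. I would show that $Y \cup \{v\}$ is a vertex cover of $G$ of size at most $k$. Every edge of $G$ either is the pendant edge $\{u,v\}$ (covered by $v$), or is incident to $v$ but not $u$ (covered by $v$), or is not incident to $u$ or $v$ (hence present in $G - \{u,v\}$ and covered by $Y$); note that no edge of $G$ is incident to $u$ other than $\{u,v\}$, because $\deg(u)=1$. This exhausts the cases and establishes the claim.

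The only mildly delicate point, and the step I would flag as the ``main obstacle'' (though it is really just a careful case distinction), is the exchange step in the forward direction: one must verify that replacing $u$ by $v$ in an optimal cover does not uncover any edge, which crucially relies on $\deg(u)=1$. Once that exchange is justified, both directions reduce to routine verification that the proposed sets cover every edge.
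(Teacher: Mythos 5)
Your proof is correct and follows essentially the same approach as the paper's: the forward direction uses the identical exchange argument (swap $u$ for $v$, justified by $\deg(u)=1$), and the backward direction adds $v$ back and checks the cases. No substantive difference.
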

	\begin{proof}
		($\Rightarrow$) Suppose $G$ has a vertex cover $X^*$ of size $k$. Since $\{u, v\}$ is an edge of $G$, at least one of $u$ and $v$ is contained in $X^*$. If $v \notin X^*$, then we let $X \defeq X^* \setminus \{u\} \cup \{v\}$. Note that $X$ is a vertex cover since $v$ is the only neighbor of $u$. If $v \in X^*$, we simply let $X \defeq X^*$. Since $v \in X$, $X \setminus \{v\}$ is a vertex cover of $G - \{u, v\}$ of size $k-1$. 
		
		($\Leftarrow$) Let $X'$ be a vertex cover of $G - \{u, v\}$ of size $k-1$. We observe that any edge in $E(G) \setminus E(G - \{u, v\})$ is incident with $v$ and conclude that $X' \cup \{v\}$ is a vertex cover of $G$ of size $k$.
		\qed
	\end{proof}
	\begin{figure}
		\begin{subfigure}{.38\textwidth}
			\centering
			\includegraphics[height=.125\textheight]{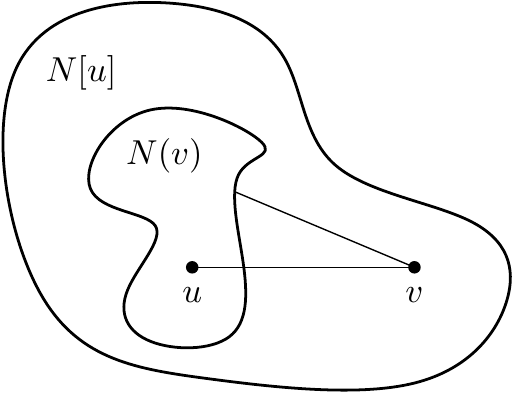}
			\caption{The situation of \cref{FellowsEtAlTwo_reduction:adjacent:vertices}.}
			\label{FellowsEtAlTwo_fig:reduction:adjacent:vertices}
		\end{subfigure}
		\hfill
		\begin{subfigure}{.6\textwidth}
			\centering
			\includegraphics[height=.105\textheight]{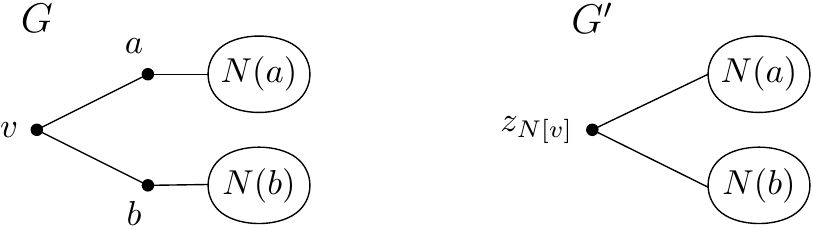}
			\caption{Illustration of \cref{FellowsEtAlTwo_reduction:degree:two}. Note that by \cref{FellowsEtAlTwo_reduction:adjacent:vertices}, we can assume that $a$ and $b$ are not adjacent.}
			\label{FellowsEtAlTwo_fig:reduction:degree:two}
		\end{subfigure}
		\caption{Illustrations of \cref{FellowsEtAlTwo_reduction:adjacent:vertices,FellowsEtAlTwo_reduction:degree:two}, respectively.}
	\end{figure}
	Before we show how to kernelize degree two vertices, we give the first auxiliary reduction rule.
	\begin{reduction}\label{FellowsEtAlTwo_reduction:adjacent:vertices}
		If $G$ has two adjacent vertices $u$ and $v$ such that $N(v) \subseteq N[u]$, then reduce $(G, k)$ to $(G - u, k - 1)$.
	\end{reduction}
	For an illustration of the situation of \cref{FellowsEtAlTwo_reduction:adjacent:vertices}, see \cref{FellowsEtAlTwo_fig:reduction:adjacent:vertices}.
	\begin{proposition}
		\Cref{FellowsEtAlTwo_reduction:adjacent:vertices} is safe, i.e., if $G$ has two adjacent vertices $u$ and $v$, and $N(v) \subseteq N[u]$, then $G$ contains a vertex cover of size $k$ if and only if $G - u$ has a vertex cover of size $k - 1$.
	\end{proposition}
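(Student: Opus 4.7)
The plan is a standard two-direction argument where the forward direction uses a local exchange to force $u$ into an optimal vertex cover, after which removing $u$ yields the desired cover of $G-u$.

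First I would handle the easy direction $(\Leftarrow)$: given a vertex cover $X'$ of $G-u$ of size $k-1$, I claim that $X' \cup \{u\}$ is a vertex cover of $G$ of size $k$. Every edge of $G$ not incident to $u$ lies in $G-u$ and is therefore covered by $X'$; every edge incident to $u$ is covered by $u$ itself. This direction requires no use of the hypothesis $N(v) \subseteq N[u]$.

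For the forward direction $(\Rightarrow)$, let $X^*$ be a vertex cover of $G$ of size $k$. If $u \in X^*$, then $X^* \setminus \{u\}$ is already a vertex cover of $G-u$ of size $k-1$ and we are done. The interesting case is $u \notin X^*$: then, to cover the edges incident to $u$, the whole set $N(u)$ must lie in $X^*$, and in particular $v \in X^*$. The key step is to argue that the swapped set $X \defeq (X^* \setminus \{v\}) \cup \{u\}$ is still a vertex cover of $G$. The only edges that could fail to be covered are those incident to $v$, say $\{v,w\}$, with $w \notin X$. By the hypothesis $N(v) \subseteq N[u]$, such a $w$ satisfies $w = u$ or $w \in N(u)$; the first case is covered by $u \in X$, and in the second case $w \in N(u) \subseteq X^*$, and since $w \neq v$ we have $w \in X$, contradicting $w \notin X$. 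Hence $X$ is a vertex cover of $G$ of size $k$ that contains $u$, and $X \setminus \{u\}$ is the required vertex cover of $G-u$ of size $k-1$.

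There is no real obstacle here; the only care needed is in the swap step, specifically in verifying that every edge incident to $v$ whose other endpoint is not in $X^*$ must actually be the edge $\{u,v\}$ itself, which is exactly where the containment $N(v) \subseteq N[u]$ (combined with $N(u) \subseteq X^*$) is used. \qed
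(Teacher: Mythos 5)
Your proposal is correct and follows essentially the same approach as the paper: the backward direction is trivial, and the forward direction uses the same local exchange, noting that if $u \notin X^*$ then $N(u) \subseteq X^*$ forces (via $N(v) \subseteq N[u]$) all of $v$'s neighbors into $X^*$, so swapping $v$ out for $u$ preserves the cover. You spell out the edge-by-edge verification of the swap step a bit more explicitly than the paper does, but there is no difference in the underlying argument.
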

	\begin{proof}
		($\Rightarrow$) Suppose $G$ has a vertex cover $X^*$ of size $k$. If $u \notin X^*$ then $N(u)$ must be in $X^*$, so by assumption, it contains
		 $N[v] \setminus \{u\}$. But then, $X \defeq X^* \setminus \{v\} \cup \{u\}$ is also a vertex cover of $G$ of size $k$, so we can assume that $u \in X$. Then, $X \setminus \{u\}$ is a vertex cover of $G - u$ of size $k-1$.
 	
 	($\Leftarrow$) is immediate since for any vertex cover $X'$ of $G - u$, $X' \cup \{u\}$ is a vertex cover of $G$.
	\qed
	\end{proof}
	
	The next reduction rule takes care of vertices of degree two and is illustrated in \cref{FellowsEtAlTwo_fig:reduction:degree:two}.	
	\begin{reduction}\label{FellowsEtAlTwo_reduction:degree:two}
		If \cref{FellowsEtAlTwo_reduction:adjacent:vertices} cannot be applied and $G$ has a vertex $v$ with $\deg(v) = 2$, then reduce $(G, k)$ to $(G', k-1)$, where $G'$ is the graph obtained from $G$ by contracting $N[v]$ to a single vertex.
	\end{reduction}
	\begin{proposition}
		\Cref{FellowsEtAlTwo_reduction:degree:two} is safe, i.e., under its stated conditions, $G$ has a vertex cover of size $k$ if and only if $G'$ has a vertex cover of size $k-1$.
	\end{proposition}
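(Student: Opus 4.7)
The plan is a two-direction argument comparing vertex covers of $G$ with those of $G'$. Write $N(v) = \{a, b\}$, and let $w$ denote the vertex of $G'$ obtained from contracting $N[v]$, so that $N_{G'}(w) = (N(a) \cup N(b)) \setminus \{v\}$. A preliminary observation is that since \cref{FellowsEtAlTwo_reduction:adjacent:vertices} cannot be applied, $a$ and $b$ are non-adjacent: the edge $\{v,a\}$ would otherwise trigger that rule, because $N(v) = \{a, b\} \subseteq N[a]$ would require $b \in N[a]$. Every edge of $G'$ that is not incident with $w$ is an edge of $G$ with both endpoints in $V(G) \setminus \{v, a, b\}$, and conversely.

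For the forward direction, suppose $X^{*}$ is a vertex cover of $G$ of size $k$. I would case-split on $|X^{*} \cap \{v, a, b\}|$. The edges $\{v, a\}$ and $\{v, b\}$ force this set to be nonempty, and if exactly one of $v, a, b$ lies in $X^{*}$ then it must be $v$, since otherwise one of those two edges is uncovered. In this single-element case, set $X' \defeq X^{*} \setminus \{v\}$: since $a, b \notin X^{*}$, every neighbor of $a$ or $b$ other than $v$ must be in $X^{*}$, so $N_{G'}(w) \subseteq X'$ covers all $G'$-edges incident with $w$, while all remaining edges of $G'$ are edges of $G$ disjoint from $\{v,a,b\}$ and hence covered. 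In every other case take $X' \defeq (X^{*} \setminus \{v, a, b\}) \cup \{w\}$; since $|X^{*} \cap \{v, a, b\}| \geq 2$ the size is at most $k - 1$, the edges at $w$ are covered trivially, and the remaining edges are handled as before. If $|X'| < k - 1$, pad with any vertex of $G'$.

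For the backward direction, let $X'$ be a vertex cover of $G'$ of size $k - 1$. If $w \in X'$, define $X \defeq (X' \setminus \{w\}) \cup \{a, b\}$, of size $k$; the edges $\{v, a\}, \{v, b\}$ are covered by $a, b$, edges of $G$ joining $a$ or $b$ to a vertex outside $\{v, a, b\}$ correspond to $w$-edges of $G'$ and are covered in $X$ by $a$ or $b$, and all other edges of $G$ are edges of $G'$ avoiding $w$ and are covered by $X' \setminus \{w\}$. If $w \notin X'$, all of $N_{G'}(w)$ lies in $X'$; take $X \defeq X' \cup \{v\}$, of size $k$: the two edges at $v$ are covered by $v$, every $G$-edge from $\{a, b\}$ to a vertex outside $\{v, a, b\}$ has its other endpoint in $N_{G'}(w) \subseteq X'$, and the remaining edges are edges of $G'$ avoiding $w$.

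The main obstacle is the single-element case of the forward direction: it is the only branch where the ``natural'' construction does \emph{not} include the contracted vertex $w$ in $X'$, and the correctness hinges on observing that the forced presence of $(N(a) \cup N(b)) \setminus \{v\}$ in $X^{*}$ silently takes over the job of covering $w$'s edges. Here the non-adjacency of $a$ and $b$ guaranteed by the prior exhaustion of \cref{FellowsEtAlTwo_reduction:adjacent:vertices} is implicitly used to rule out edge cases (there is no ``$a$-$b$ edge'' to worry about). The remaining cases are routine bookkeeping about which of the three vertices $v, a, b$ are present or absent in $X^{*}$.
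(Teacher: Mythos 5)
Your proof is correct and follows essentially the same approach as the paper: both directions proceed by casing on how the vertex cover intersects $\{v,a,b\}$ (or $N(v)$), both use the non-adjacency of $a$ and $b$ obtained from prior exhaustion of \cref{FellowsEtAlTwo_reduction:adjacent:vertices}, and the constructions of the transformed covers coincide. The only difference is cosmetic: you merge the paper's ``$N(v)\subseteq X$'' and ``$|X\cap N(v)|=1$'' cases into a single ``$|X^*\cap\{v,a,b\}|\geq 2$'' case, which streamlines the forward direction slightly but does not change the argument.
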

	\begin{proof}
		Throughout the proof, we denote the neighborhood of $v$ in $G$ by $N(v) = \{a, b\}$ and the vertex in $G'$ that was created due to the contraction of $N[v]$ by $z_{N[v]}$. We can assume that $\{a, b\} \notin E(G)$: If the edge $\{a, b\}$ was present, then $N(v) \subseteq N[a]$ (and $N(v) \subseteq N[b]$), so we could have applied \cref{FellowsEtAlTwo_reduction:adjacent:vertices}.
		
		($\Rightarrow$) We observe that each edge in $E(G') \setminus E(G)$ has an endpoint in $\{z_{N[v]}\} \cup N(a, b)$. Let $X$ be a vertex cover of $G$ of size $k$. If $X \cap N(v) = \emptyset$, then $\{v\} \cup N(a, b) \subseteq X$ and we can conclude that $X \setminus \{v\}$ is a vertex cover of $G'$. If $N(v) \subseteq X$, then $X' \defeq X \setminus N(v) \cup \{z_{N[v]}\}$ is a vertex cover of $G'$. (Note that in this case, $X'$ has size at most $k-1$ as well.) If $X$ contains precisely one vertex from $N(v)$, assume w.l.o.g.\ that $X \cap N(v) = \{a\}$, then $v \in X$ (otherwise the edge $\{v, b\}$ is not covered), so $X \setminus \{v, a\} \cup \{z_{N[v]}\}$ is a vertex cover of $G'$ of size $k-1$.
		
		($\Leftarrow$) Let $X'$ be a vertex cover of $G'$ of size $k-1$. We distinguish the cases when $z_{N[v]} \in X'$ and when $z_{N[v]} \notin X'$. In the former case, $X' \setminus \{z_{N[v]}\} \cup \{a, b\}$ is a vertex cover of $G$, since each edge in $E(G) \setminus E(G')$ is incident with a vertex in $\{a, b\}$. In the latter case, $N(a, b) \subseteq X'$ since $z_{N[v]} \notin X'$, and we have that $X' \cup \{v\}$ is a vertex cover of $G$: Since $\{a, b\} \notin E(G)$, each edge in $E(G) \setminus E(G')$ is incident with a vertex in $\{v\} \cup N(a, b)$. In both cases, the size of the resulting vertex cover is $k$.
		\qed
	\end{proof}
	
	\newcommand\degreev{\alpha}
	\newcommand\degreevhalf{\frac{\degreev}{2}}	
	
	Before we proceed with kernelizing vertices of degree larger than two, we require one more auxiliary reduction rule. 
	This reduction rule will be crucially used to argue that we can exclude certain structures appearing in the subgraphs induced by the neighborhoods of small-degree vertices. 
	It captures~\cite[Reductions R.4 and R.5]{FellowsEtAlTwo_FS99} and is illustrated in \cref{FellowsEtAlTwo_fig:reduction:almost:clique:neighborhood}. Note that due to its complexity, it will only be executed for vertices whose degree is bounded by a fixed constant $\degreev$ (independent of $k$). In particular, for our purposes it will be sufficient to make use of the following reduction for $\degreev \le 4$.
	
	\begin{reduction}\label{FellowsEtAlTwo_reduction:cliques:co:matching:new}
		Suppose $G$ has a vertex $v$ such that the following hold. There is a partition $(C_1, C_2)$ of $N_G(v)$ where $\card{C_1} \ge \card{C_2}$ and the following hold.
		\begin{enumerate}[label=(\roman*)]
			\item $C_i$ is a clique for all $i \in [2]$.\label{FellowsEtAlTwo_reduction:cliques:co:matching:new:cliques}
			\item Let $M$ be the set of non-edges of $G[C_1, C_2]$. For each $c_1 \in C_1$, there is \emph{precisely one} $f \in M$ such that $c_1 \in f$.\label{FellowsEtAlTwo_reduction:cliques:co:matching:new:m}
		\end{enumerate}
		Then, reduce $(G, k)$ to $(G', k - \card{C_2})$, where $G'$ is obtained from $G$ by 
		\begin{enumerate}[label=(\arabic*.)]
			\item deleting $v$ and $C_2$, and
			\item for all $\{c_1, c_2\} \in M$ with $c_1 \in C_1$ and $c_2 \in C_2$,  adding all edges between $c_1$ and $N_G(c_2)$.
		\end{enumerate}
	\end{reduction}
	\begin{figure}[t]
		\centering
		\includegraphics[width=.8\textwidth]{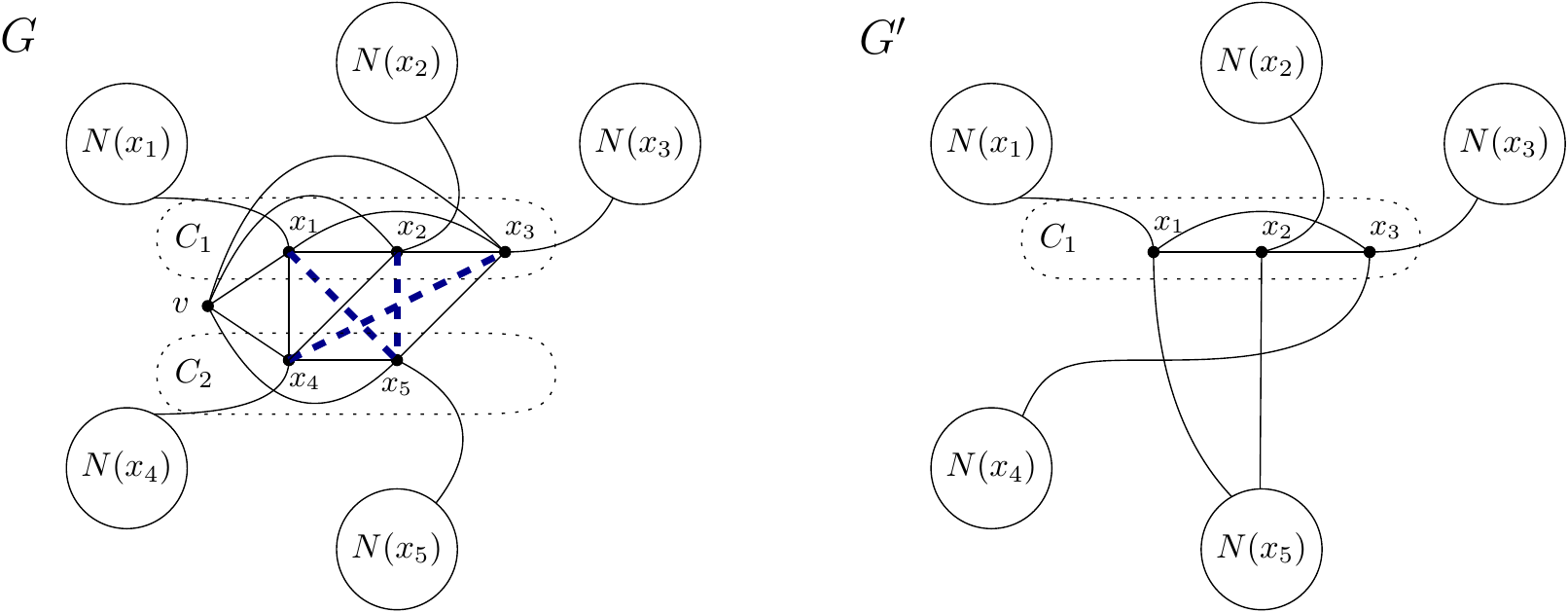}
		\caption{Illustration of \cref{FellowsEtAlTwo_reduction:cliques:co:matching:new}. Note that $\card{C_1} > \card{C_2}$, and the bold dotted lines between vertices of $C_1$ and $C_2$ are the set of non-edges $M$ in $G[C_1, C_2]$ satisfying condition \ref{FellowsEtAlTwo_reduction:cliques:co:matching:new:m}: For every vertex $x \in C_1$ there is precisely one element in $M$ containing $x$.}
		\label{FellowsEtAlTwo_fig:reduction:almost:clique:neighborhood}
	\end{figure}	
	\begin{proposition}\label{FellowsEtAlTwo_prop:cliques:co:matching:new}
		\Cref{FellowsEtAlTwo_reduction:cliques:co:matching:new} is safe, i.e.\ under its stated conditions, $G$ has a vertex cover of size $k$ if and only if $G'$ has a vertex cover of size $k - \card{C_2}$.
	\end{proposition}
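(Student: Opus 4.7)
The plan is to prove both directions by a structural case analysis of $X \cap N[v]$, exploiting that $C_1$ and $C_2$ are cliques and that $v$ is adjacent to all of $C_1 \cup C_2$. For convenience, for each $c_1 \in C_1$ let $\mu(c_1) \in C_2$ denote the unique vertex with $\{c_1, \mu(c_1)\} \in M$. Observe first that any vertex cover $X$ of $G$ satisfies $|C_i \setminus X| \leq 1$ for $i \in \{1,2\}$; a vertex of $C_1 \cup C_2$ missing from $X$ forces $v \in X$; and if both some $c^* \in C_1$ and some $c' \in C_2$ are missing from $X$, then $\{c^*, c'\}$ must be a non-edge of $G$, hence $c' = \mu(c^*)$.

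For the forward direction, given a vertex cover $X$ of $G$ of size at most $k$, I would define $X' \defeq (X \cap V(G')) \cup S$, where $S \defeq \{c\}$ if $C_1 \setminus X = \{c\}$ and $\mu(c) \in X$, and $S \defeq \emptyset$ otherwise. A short case distinction over the possible patterns of $X \cap N[v]$ gives $|X'| \leq k - |C_2|$. To check that $X'$ covers $G'$, observe that edges of $G'$ split into old edges of $G[V(G')]$ (covered by $X \cap V(G') \subseteq X'$) and new edges from some $c_1 \in C_1$ to $N_G(\mu(c_1)) \cap V(G')$. The latter are covered whenever $c_1 \in X'$, which leaves only $c_1 = c$ where $c \in C_1 \setminus X$: if $\mu(c) \in X$ then $c \in S \subseteq X'$; and if $\mu(c) \notin X$ then every neighbor of $\mu(c)$ must be in $X$, hence in $X \cap V(G') \subseteq X'$.

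For the backward direction, given a vertex cover $X'$ of $G'$ of size $k - |C_2|$, note that $C_1$ remains a clique in $G'$, so at most one vertex $c^* \in C_1$ is missing from $X'$. If $C_1 \subseteq X'$, set $X \defeq X' \cup C_2$; otherwise, set $X \defeq X' \cup (C_2 \setminus \{\mu(c^*)\}) \cup \{v\}$. Both constructions give $|X| = k$, and checking that $X$ is a vertex cover of $G$ is routine for most edges. The delicate point is covering edges between $\mu(c^*)$ and its external neighbors $w \notin N[v]$ in the second construction, where one uses that the reduction placed a new edge $\{c^*, w\}$ in $G'$ for each such $w$; since $c^* \notin X'$, this forces $w \in X' \subseteq X$.

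The main obstacle is the asymmetric definition of $X'$ in the forward direction. Simply restricting $X$ to $V(G')$ fails when some $c \in C_1$ is missing from $X$ but $\mu(c) \in X$, because $\mu(c)$'s external neighbors were kept out of $X$ precisely because $\mu(c)$ itself covered the relevant edges; once $\mu(c)$ is deleted, we have to add $c$ as a stand-in. The size budget $k - |C_2|$ accommodates this because the pattern in question forces $C_2 \subseteq X$, providing exactly the right amount of slack.
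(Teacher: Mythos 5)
Your proof is correct and follows essentially the same line of argument as the paper's: the same observation that at most one vertex of each clique $C_i$ can be missing from a vertex cover, the same use of condition \ref{FellowsEtAlTwo_reduction:cliques:co:matching:new:m} to pin down the unique matched non-edge when one vertex is missing from each, and the same vertex-cover constructions in each direction (remove $C_2$ and possibly swap in the missing $C_1$-vertex; conversely add back $C_2$, or $C_2 \setminus \{\mu(c^*)\}$ together with $v$). The paper packages this as two self-contained iff-claims (one for $N_G(v) \subseteq X$, one for a missing vertex in each clique) plus a short swap argument folding in the ``miss exactly one from $N_G(v)$'' case, whereas you give a single uniform formula $X' = (X\cap V(G'))\cup S$ that covers all forward cases at once --- a mild stylistic streamlining, not a different route.
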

	\begin{proof}
		Since by assumption \cref{FellowsEtAlTwo_reduction:cliques:co:matching:new:cliques} of \cref{FellowsEtAlTwo_reduction:cliques:co:matching:new}, $C_1$ and $C_2$ are cliques in $G$, and since $C_1$ remains a clique in $G'$, we make the following observation.
		\begin{observation}\label{FellowsEtAlTwo_obs:cliques:co:matching:new:all:cases}
			Every vertex cover of $G$ contains at least $\card{C_i} - 1$ vertices from $C_i$ for all $i \in [2]$, and every vertex cover of $G'$ contains at least $\card{C_1} - 1$ vertices from $C_1$.
		\end{observation}
		We now prove the proposition by a case analysis on the structure of the intersection of vertex covers of $G$ and $G'$ with $N_G(v) = C_1 \cup C_2$ and $C_1$, respectively. \Cref{FellowsEtAlTwo_obs:cliques:co:matching:new:all:cases} will be used later to argue that we covered all possible cases.
		\begin{claim}\label{FellowsEtAlTwo_claim:clique:co:matching:new:1}
			$G$ contains a vertex cover $X$ of size $k$ such that $N_G(v) \subseteq X$ if and only if $G'$ contains a vertex cover $X'$ of size $k - \card{C_2}$ such that $C_1 \subseteq X'$. 
		\end{claim}
		\begin{clproof}
			($\Rightarrow$) Let $X$ be a vertex cover of $G$ of size $k$ such that $N_G(v) \subseteq X$. (Note that we can assume that $v \notin X$.) We have that $X' \defeq X \setminus C_2$ is a vertex cover of $G^* \defeq G - (\{v\} \cup C_2)$. By construction, any edge in $E(G') \setminus E(G^*)$ is incident with a vertex in $C_1 \subseteq X'$, so $X'$ is a vertex cover of $G'$. Clearly, $\card{X'} = k - \card{C_2}$.
			
			($\Leftarrow$) Let $X'$ be a vertex cover of $G'$ of size $k - \card{C_2}$ such that $C_1 \subseteq X'$. Then, $X \defeq X' \cup C_2$ is a vertex cover of $G$, since every edge in $E(G) \setminus E(G')$ is either incident with a vertex in $C_2$ or with $v$. For the latter case, we observe that $N_G(v) = (C_1 \cup C_2) \subseteq X$. Clearly, $\card{X} = k$.
      \claimqed
		\end{clproof}
		We observe that \cref{FellowsEtAlTwo_claim:clique:co:matching:new:1} also covers the case when a size-$k$ vertex cover of $G$ misses precisely one vertex from $N_G(v)$: Let $X^*$ be such a vertex cover and let $c \in N_G(v) \setminus X$. Since $X^*$ has to contain an endpoint of the edge $\{v, c\}$ and $c \notin X^*$, we can conclude that $v \in X^*$. Now, we simply let $X \defeq X^* \setminus \{v\} \cup \{c\}$ and observe that $X$ is a vertex cover of $G$ of size $k$ such that $N_G(v) \subseteq X$.
		
		\begin{claim}\label{FellowsEtAlTwo_claim:clique:co:matching:new:2}
			$G$ contains a vertex cover $X$ of size $k$ with $\card{C_i \setminus X} = 1$ for all $i \in [2]$ if and only if $G'$ contains a vertex cover $X'$ of size $k - \card{C_2}$ with $\card{C_1 \setminus X'} = 1$.
		\end{claim}
		\begin{clproof}
			($\Rightarrow$) Let $X$ be a size-$k$ vertex cover of $G$ such that for all $i \in [2]$, $\card{C_i \setminus X} = 1$ and let $c_i \in C_i \setminus X$ be the unique vertex in $C_i$ that is not contained in $X$. First, since $c_i \notin X$, we have that $v \in X$, otherwise the edge $\{v, c_i\}$ is not covered by $X$. Furthermore, we can conclude that $\{c_1, c_2\} \notin E(G)$, since if $\{c_1, c_2\}$ was an edge of $G$, then this edge was not covered by $X$. Clearly, since $\{c_1, c_2\} \notin E(G)$, we have that $\{c_1, c_2\} \notin E(G[C_1, C_2])$, so $\{c_1, c_2\} \in M$.
			
			We have argued that $\{c_1, c_2\} \in M$, and by condition \cref{FellowsEtAlTwo_reduction:cliques:co:matching:new:m} of \cref{FellowsEtAlTwo_reduction:cliques:co:matching:new} we know that $\{c_1, c_2\}$ is the only element in $M$ that contains $c_1$.
			We now show that 
			$$X' \defeq X \cap V(G') = X \setminus (\{v\} \cup C_2)$$ 
			is a vertex cover of $G'$. Clearly, $X'$ is a vertex cover of $G^* \defeq G - (\{v\} \cup C_2)$. Now, consider an edge $e' \in E(G') \setminus E(G^*)$. By construction, one of the endpoints of $e'$, say $x$, is from $C_1$. If $x \neq c_1$, then the edge $e'$ is covered by $X'$, since $C_1 \setminus \{c_1\} \subseteq X'$. 
		Now suppose that $x = c_1$ and denote the other endpoint of $e'$ by $y$. Since $e' \in E(G') \setminus E(G^*)$, following the construction of \cref{FellowsEtAlTwo_reduction:cliques:co:matching:new}, we can conclude that there is some $\{c_1, z\} \in M$ such that $y \in N_G(z)$. We can infer that $z = c_2$, since \cref{FellowsEtAlTwo_reduction:cliques:co:matching:new:m} asserts that there is only one element in $M$ that contains $c_1$ and we know by the above argument that $\{c_1, c_2\} \in M$. As $X$ is a vertex cover of $G$ and $c_2 \notin X$ by assumption, we know that $y \in X$, and so:
		$$y \in N_G(c_2) \cap V(G') \subseteq X \cap V(G') = X',$$
		hence the edge $\{c_1, y\}$ is covered by $X'$. We can conclude that $X'$ is a vertex cover of $G'$.
			Since we obtained $X'$ from $X$ by removing from it the vertex $v$ and $\card{C_2} - 1$ vertices from $C_2$, we have that $\card{X'} = k - \card{C_2}$. Clearly, $\card{C_1 \setminus X'} = 1$.
			
			($\Leftarrow$) Let $X'$ be a vertex cover of $G'$ of size $k - \card{C_2}$ such that $\card{C_1 \setminus X'} = 1$ and denote by $c_1 \in C_1 \setminus X'$ the unique vertex of $C_1$ that is not contained in $X'$. Let furthermore $c_2 \in C_2$ be such that $\{c_1, c_2\} \in M$. By condition \cref{FellowsEtAlTwo_reduction:cliques:co:matching:new:m}, such a vertex $c_2$ exists and it is unique.
			We argue that $X \defeq X' \cup \{v\} \cup (C_2 \setminus \{c_2\})$ is a vertex cover of $G$. Suppose for a contradiction that there is an edge $e \in E(G)$ that is not covered by $X$. Since $X'$ is a vertex cover of $G'$ and $X \supseteq X'$, we have that $e \in E(G) \setminus E(G')$. By construction, each such edge $e$ has (at least) one endpoint in $\{v\} \cup C_2$. Since $\{v\} \cup (C_2 \setminus \{c_2\}) \subseteq (N_G[v] \setminus \{c_1, c_2\}) \subseteq X$, we can conclude that $c_2$ is an endpoint of $e$, and the endpoint of $e$ other than $c_2$, say $y$, is contained in $N_G(c_2) \setminus (N_G[v] \setminus \{c_1\})$. Since $\{c_1, c_2\} \in M$, it is a non-edge, and so $y \in N_G(c_2) \setminus N_G[v]$.
		From the construction of \cref{FellowsEtAlTwo_reduction:cliques:co:matching:new}, we infer that $y \in N_G(c_2) \cap V(G') \subseteq N_{G'}(c_1) \cap V(G')$.
		
		To conclude, since $X \cap V(G') = X' \cap V(G')$, we have that if $y \notin X$, then $y \notin X'$. Since $y \in N_{G'}(c_1)$ and $c_1 \notin X'$ by assumption, the edge $\{c_1, y\}$ is not covered by $X'$, a contradiction with the assumption that $X'$ was a vertex cover of $G'$.
		It is clear that $\card{X} = k$ and that for all $i \in [2]$, $\card{C_i \setminus X} = 1$. \claimqed
		\end{clproof}
		We are now ready to finalize the proof of safeness of \cref{FellowsEtAlTwo_reduction:cliques:co:matching:new}. Suppose $G$ has a vertex cover $X$ of size $k$. Then, by \cref{FellowsEtAlTwo_obs:cliques:co:matching:new:all:cases}, we are in one of the following cases: (I) $N_G(v) \subseteq X$, (II) $\card{N_G(v) \setminus X} = 1$, or (III) for all $i \in [2]$, $\card{C_i \setminus X} = 1$. In cases (I) and (II), we can conclude that $G'$ has a vertex cover of size $k - \card{C_2}$ by \cref{FellowsEtAlTwo_claim:clique:co:matching:new:1} (and the remark thereafter). In case (III), $G'$ has a vertex cover of size $k - \card{C_2}$ by \cref{FellowsEtAlTwo_claim:clique:co:matching:new:2}.
		
		For the other direction, suppose $G'$ has a vertex cover $X'$ of size $k - \card{C_2}$. Again by \cref{FellowsEtAlTwo_obs:cliques:co:matching:new:all:cases}, we are in one of the following two cases: (IV) $C_1 \subseteq X'$, or (V) $\card{C_1 \setminus X'} = 1$.
		In case (IV), we can use \cref{FellowsEtAlTwo_claim:clique:co:matching:new:1} to conclude that $G$ has a vertex cover of size $k$ and in case (V) we can use \cref{FellowsEtAlTwo_claim:clique:co:matching:new:2}. This finishes the proof of \cref{FellowsEtAlTwo_prop:cliques:co:matching:new}. \qed
	\end{proof}
	
	Before we turn to kernelizing degree-three vertices, we observe that a combination of \cref{FellowsEtAlTwo_reduction:isolated,FellowsEtAlTwo_reduction:adjacent:vertices,FellowsEtAlTwo_reduction:cliques:co:matching:new} kernelizes vertices of degree one and two as well. Suppose $v$ is a vertex of degree one in $G$ whose only neighbor is $u$. Then, $N(v) = \{u\} \subseteq N[u]$, so following \cref{FellowsEtAlTwo_reduction:adjacent:vertices}, we could have removed the vertex $u$ and decreased the parameter value by one. In $G - u$, the vertex $v$ is an isolated vertex, so by \cref{FellowsEtAlTwo_reduction:isolated} it can be removed. These two steps together have the same effect as an application of \cref{FellowsEtAlTwo_reduction:pendant:edge}, the rule for kernelizing vertices of degree one. 
	
	Next, suppose that $v$ is a vertex of degree two and let $\{a, b\} \defeq N(v)$. There are two cases we have to consider.
	If $\{a, b\} \in E(G)$, then $N(v) \subseteq N[a]$ and we could have applied \cref{FellowsEtAlTwo_reduction:adjacent:vertices}. In the resulting instance whose graph is $G - a$, the vertex $v$ is of degree one so it would be removed by a combination of \cref{FellowsEtAlTwo_reduction:isolated,FellowsEtAlTwo_reduction:adjacent:vertices}, following the same argument as above.
	If $\{a, b\} \notin E(G)$, then $v$ trivially satisfies the conditions of \cref{FellowsEtAlTwo_reduction:cliques:co:matching:new} by considering the partition of $\{a, b\}$ into parts $\{a\}$ and $\{b\}$. Applying \cref{FellowsEtAlTwo_reduction:cliques:co:matching:new}, we can remove the vertex $v$.
	\begin{observation}
		After an exhaustive application of
		\cref{FellowsEtAlTwo_reduction:isolated,FellowsEtAlTwo_reduction:adjacent:vertices,FellowsEtAlTwo_reduction:cliques:co:matching:new} (where for \cref{FellowsEtAlTwo_reduction:cliques:co:matching:new}, $\degreev = 2$), the resulting graph has minimum degree three.
	\end{observation}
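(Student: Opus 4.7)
The plan is a straightforward case analysis on the possible small degrees. Assume for contradiction that after exhaustive application of the three reductions the resulting graph $G$ contains a vertex $v$ with $\deg(v) \in \{0, 1, 2\}$; I would show in each case that at least one of \cref{FellowsEtAlTwo_reduction:isolated,FellowsEtAlTwo_reduction:adjacent:vertices,FellowsEtAlTwo_reduction:cliques:co:matching:new} is still applicable, yielding the desired contradiction.

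If $\deg(v) = 0$, then \cref{FellowsEtAlTwo_reduction:isolated} applies by definition. If $\deg(v) = 1$ with unique neighbor $u$, then $N(v) = \{u\} \subseteq N[u]$, so the pair $u, v$ satisfies the hypothesis of \cref{FellowsEtAlTwo_reduction:adjacent:vertices}. For $\deg(v) = 2$, let $N(v) = \{a, b\}$ and split on whether $\{a, b\} \in E(G)$: if so, then $N(v) \subseteq N[a]$ and \cref{FellowsEtAlTwo_reduction:adjacent:vertices} applies to the pair $v, a$; if not, I would invoke \cref{FellowsEtAlTwo_reduction:cliques:co:matching:new} (with $\degreev = 2$) using the partition $(C_1, C_2) = (\{a\}, \{b\})$ of $N(v)$.

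The only step that requires any verification is checking that this partition indeed satisfies the hypotheses of \cref{FellowsEtAlTwo_reduction:cliques:co:matching:new}: condition~\ref{FellowsEtAlTwo_reduction:cliques:co:matching:new:cliques} holds trivially since singletons are cliques, and for condition~\ref{FellowsEtAlTwo_reduction:cliques:co:matching:new:m} the non-edge set $M$ consists exactly of $\{a, b\}$ (since $\{a, b\} \notin E(G)$), so the unique vertex $a \in C_1$ is contained in precisely one element of $M$. The rest of the argument is essentially a restatement of the informal discussion given in the two paragraphs preceding the observation, so no genuine obstacle is expected; the main care is just ensuring that the degenerate case $\card{C_1} = \card{C_2} = 1$ does fall within the scope of \cref{FellowsEtAlTwo_reduction:cliques:co:matching:new} as stated.
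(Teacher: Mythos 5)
Your proof is correct and mirrors the paper's own argument (given in the paragraphs immediately preceding the observation): degree~$0$ is handled by \cref{FellowsEtAlTwo_reduction:isolated}, degree~$1$ and the adjacent degree-$2$ case by \cref{FellowsEtAlTwo_reduction:adjacent:vertices}, and the non-adjacent degree-$2$ case by \cref{FellowsEtAlTwo_reduction:cliques:co:matching:new} with the singleton partition $(\{a\},\{b\})$. Your explicit verification that the singleton partition meets conditions~\ref{FellowsEtAlTwo_reduction:cliques:co:matching:new:cliques} and~\ref{FellowsEtAlTwo_reduction:cliques:co:matching:new:m} is a slightly more careful rendering of what the paper calls ``trivially satisfies,'' but the approach is the same.
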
	
	
	We are now ready to kernelize degree-three vertices.
	
	\begin{reduction}\label{FellowsEtAlTwo_reduction:degree:three}
		If neither \cref{FellowsEtAlTwo_reduction:adjacent:vertices} nor \cref{FellowsEtAlTwo_reduction:cliques:co:matching:new} can be applied
		and $G$ contains a vertex $v$ of degree three (where $N(v) = \{a, b, c\}$), then reduce $(G, k)$ to $(G', k)$ where $G'$ is the graph on vertex set $V(G) \setminus \{v\}$ and edge set $(E(G) \cap \binom{V(G')}{2}) \cup F$, where 
		\begin{align}\label{FellowsEtAlTwo_eq:reduction:degree:three:F}
			F \defeq  \left\lbrace \{a, b\}, \{b, c\} \right\rbrace \cup \left\lbrace \{a, x\} \mid x \in N_G(b) \right\rbrace {} & \cup \left\lbrace \{b, y\} \mid y \in N_G(c) \right\rbrace \\ 
			& \cup \left\lbrace \{c, z\} \mid z \in N_G(a) \right\rbrace. \nonumber
		\end{align}
	\end{reduction}
	
	We illustrate the above reduction rule in \cref{FellowsEtAlTwo_fig:reduction:degree:three}.
	
	\begin{figure}[t]
		\centering
		\includegraphics[height=.125\textheight]{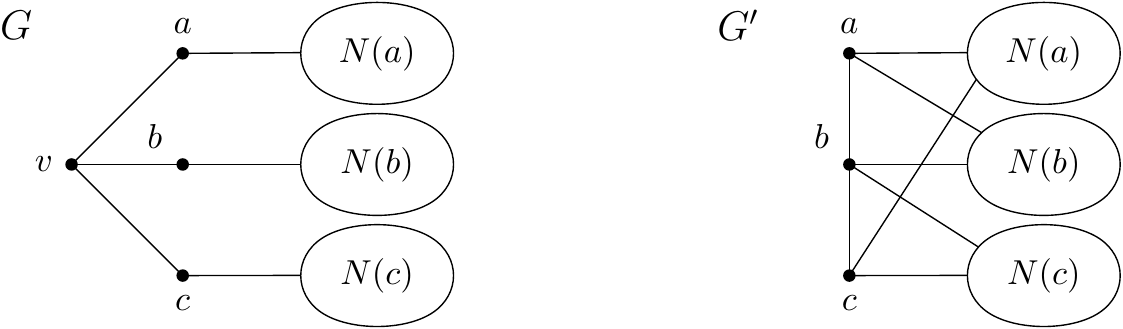}
		\caption{Illustration of \cref{FellowsEtAlTwo_reduction:degree:three}.}
		\label{FellowsEtAlTwo_fig:reduction:degree:three}
	\end{figure}
	
	\begin{proposition}
		\Cref{FellowsEtAlTwo_reduction:degree:three} is safe, i.e., if its conditions are satisfied, then $G$ contains a vertex cover of size $k$ if and only if $G'$ contains a vertex cover of size $k$.
	\end{proposition}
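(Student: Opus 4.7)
The first step is to exploit the non-applicability of \cref{FellowsEtAlTwo_reduction:adjacent:vertices,FellowsEtAlTwo_reduction:cliques:co:matching:new} to deduce that $\{a,b,c\}$ is an independent set in $G$. If $G[N(v)]$ contained two or three edges, some $u \in \{a,b,c\}$ would satisfy $N(v) \subseteq N[u]$ and \cref{FellowsEtAlTwo_reduction:adjacent:vertices} would apply; if $G[N(v)]$ contained the single edge $\{a,b\}$, the partition $(\{a,b\},\{c\})$ of $N(v)$ would satisfy the hypotheses of \cref{FellowsEtAlTwo_reduction:cliques:co:matching:new}, since both parts are cliques and the non-edge set $M = \{\{a,c\},\{b,c\}\}$ of $G[\{a,b\},\{c\}]$ contains each of $a$ and $b$ precisely once. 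Hence none of the three pairs among $\{a,b,c\}$ is an edge of $G$, a fact I will use constantly.

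For the forward direction, take a vertex cover $X$ of $G$ of size at most $k$ and construct $X'$ for $G'$ by cases on $X \cap \{v,a,b,c\}$. If $v \notin X$, then $\{a,b,c\} \subseteq X$ (to cover $\{v,a\},\{v,b\},\{v,c\}$) and $X' \defeq X$ covers every edge of $F$ since each such edge has an endpoint in $\{a,b,c\}$. If $v \in X$, then for each $u \in \{a,b,c\} \setminus X$ the vertex-cover property forces $N_G(u) \setminus \{v\} \subseteq X$, and I set $X' \defeq (X \setminus \{v\}) \cup \{w\}$, where $w \in \{a,b,c\}$ is chosen according to the cyclic shape of $F$ so that (i) $w$ covers whichever of $\{a,b\},\{b,c\}$ is not already covered by $X \cap \{a,b,c\}$, and (ii) each of the three $F$-families $\{a,x\}$, $\{b,y\}$, $\{c,z\}$ is covered either by a vertex of $(X \cap \{a,b,c\}) \cup \{w\}$ or by one of the forced neighborhood inclusions, e.g., $w \defeq b$ if $X \cap \{a,b,c\} \in \{\emptyset,\{c\},\{a,c\}\}$, $w \defeq c$ if $X \cap \{a,b,c\} \in \{\{a\},\{a,b\}\}$, and $w \defeq a$ if $X \cap \{a,b,c\} \in \{\{b\},\{b,c\}\}$. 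A short case check confirms $|X'| \le k$ and that $X'$ is a vertex cover of $G'$.

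For the backward direction, let $X'$ be a vertex cover of $G'$ of size at most $k$. Since $\{a,b\},\{b,c\} \in F \subseteq E(G')$, either $b \in X'$ or $\{a,c\} \subseteq X'$. If $\{a,b,c\} \subseteq X'$, take $X \defeq X'$: the only $G$-edges missing from $G'$ are $\{v,a\},\{v,b\},\{v,c\}$, covered by $\{a,b,c\}$. Otherwise, the cyclic design of $F$ yields a forced neighborhood inclusion into $X'$ for each missing $u \in \{a,b,c\}$: $a \notin X'$ together with the $\{a,x\}$-edges of $F$ forces $N_G(b) \setminus \{v\} \subseteq X'$; $c \notin X'$ together with the $\{c,z\}$-edges forces $N_G(a) \setminus \{v\} \subseteq X'$; and $b \notin X'$ (which also gives $\{a,c\} \subseteq X'$) together with the $\{b,y\}$-edges forces $N_G(c) \setminus \{v\} \subseteq X'$. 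I then swap the indicated vertex for $v$: $X \defeq (X' \setminus \{b\}) \cup \{v\}$ when $a \notin X'$; $X \defeq (X' \setminus \{a\}) \cup \{v\}$ when $c \notin X'$ and $a \in X'$; $X \defeq (X' \setminus \{c\}) \cup \{v\}$ when $b \notin X'$. In each case $v$ covers the three $G$-edges incident to $v$, the forced neighborhood covers the $G-v$-edges incident to the swapped-out vertex, and all other edges lie in $E(G')$ and are covered by what remains of $X'$.

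The main obstacle is the bookkeeping imposed by the asymmetric cyclic design of $F$: the $N_G(b)$-type edges attach to $a$, the $N_G(c)$-type to $b$, and the $N_G(a)$-type to $c$. This asymmetry dictates that in both directions the vertex added or swapped in must be at the correct cyclic offset relative to the missing neighbor; using the wrong shift fails to cover either one of the triangle edges $\{a,b\},\{b,c\}$ or one of the three $F$-families. Once the correspondence between forward and backward cases is aligned, each subcase reduces to a short verification against the definition of $F$ and the vertex-cover property of the starting cover.
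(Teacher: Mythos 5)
Your proof is correct and follows essentially the same route as the paper's: you first deduce that $N(v)$ is independent from the non-applicability of Reductions~\ref{FellowsEtAlTwo_reduction:adjacent:vertices} and~\ref{FellowsEtAlTwo_reduction:cliques:co:matching:new}, and then carry out a case analysis on $X\cap N(v)$ (resp.\ $X'\cap N(v)$), swapping $v$ for the vertex of $N(v)$ dictated by the cyclic design of $F$ -- precisely the same choices the paper makes. The only (harmless) omission is the forward subcase $v\in X$ and $\{a,b,c\}\subseteq X$, which your rule for choosing $w$ does not list; there $X\setminus\{v\}$ already contains $N(v)$ and hence covers $F$, so any choice of $w$ (or none) works.
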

	\begin{proof}
		We first show that we can assume that there are no edges between the vertices in $N(v)$.
		\begin{claim*}
			If neither \cref{FellowsEtAlTwo_reduction:adjacent:vertices} nor \cref{FellowsEtAlTwo_reduction:cliques:co:matching:new} can be applied,
			then $N(v) = \{a, b, c\}$ is an independent set in $G$.
		\end{claim*}
		\begin{clproof}
			If $G[N(v)]$ contains at least two edges, then these two edges have a common endpoint, say $x \in N(v)$. But then, $N(v) \subseteq N[x]$, so we could have applied \cref{FellowsEtAlTwo_reduction:adjacent:vertices}, a contradiction. If $G[N(v)]$ contains precisely one edge, assume w.l.o.g.\ that $\{a, b\} \in E(G)$, then we could have applied \cref{FellowsEtAlTwo_reduction:cliques:co:matching:new} with $C_1 = \{a, b\}$, and $C_2 = \{c\}$. Clearly, $\{a, b\}$ and $\{c\}$ are cliques and $M \defeq \{\{a, c\}, \{b, c\}\}$, the set of non-edges of $G[C_1, C_2]$, satisfies the conditions of \cref{FellowsEtAlTwo_reduction:cliques:co:matching:new}\cref{FellowsEtAlTwo_reduction:cliques:co:matching:new:m}.
			\claimqed
		\end{clproof}
		
		Due to the previous claim, we will assume that $N(v) = \{a, b, c\}$ is an independent set throughout the following.
		
		\begin{claim}\label{FellowsEtAlTwo_claim:degree:three:cor:1}
			If $G$ has a vertex cover of size $k$, then $G'$ has a vertex cover of size at most $k$.		
		\end{claim}
		\begin{clproof}
			We first observe that, for each edge $e' \in E(G')$, either $e' \in E(G - v)$ or $e' \in F$. Hence, any vertex cover $X^*$ of $G - v$ is a vertex cover of $G'$ if each edge in $F$ has an endpoint in $X^*$, since by definition, $X^*$ contains an endpoint of each edge in $E(G - v)$.	
			
	Let $X$ be a vertex cover of $G$ of size $k$. If $v \notin X$, then $N(v) = \{a, b, c\} \subseteq X$. 
	By \cref{FellowsEtAlTwo_eq:reduction:degree:three:F}, each edge in $F$ has at least one endpoint in $\{a, b, c\}$ and hence in $X$, so we can conclude that $X$ is a vertex cover of $G'$ of size $k$. 
			
	Suppose $v \in X$ and note for the remainder of the proof that $X \setminus \{v\}$ is a vertex cover of $G-v$ of size $k-1$. We argue that we can assume that at most one vertex from $N(v)$ is contained in $X$: For the case that $N(v) \subseteq X$, we can apply the same argument as above to conclude that $X \setminus \{v\}$ is a vertex cover of $G'$ of size $k-1$.
	If $X$ contains precisely two vertices from $N(v) = \{a, b, c\}$, assume w.l.o.g.\ that $\{a, b\} \subseteq X$, then $X' \defeq X \setminus \{v\} \cup \{c\}$ is a vertex cover of $G'$ of size $k$, since again, $X'$ contains $N(v)$.
			
	We assume that $X$ contains at most one vertex from $N(v)$. If $X$ contains no vertex of $N(v)$, then $X$ must contain all of $N_G(a, b, c)$. Hence the only edges in $F$ that are not covered by $X$ -- see \cref{FellowsEtAlTwo_eq:reduction:degree:three:F} -- are incident with the vertex $b$. Together with the fact that $X \setminus \{v\}$ is a vertex cover of $G - v$, we can conclude that $X \setminus \{v\} \cup \{b\}$ is a vertex cover of $G'$.
	
	From now on, we assume that precisely one vertex of $N(v)$ is contained in the vertex cover $X$ of $G$. If $a \in X$, then $b, c \notin X$ and hence $N_G(b, c) \subseteq X$. Again, $X \setminus \{v\}$ is a vertex cover of $G - v$ and we observe that any edge in $e' \in F$ that does not have an endpoint in $X \setminus \{v\}$ is incident with the vertex $c$. By~\cref{FellowsEtAlTwo_eq:reduction:degree:three:F}, either $e' = \{b, c\}$ or $e' = \{c, z\}$ for some $z \in N(a)$. We can conclude that $X \setminus \{v\} \cup \{c\}$ is a vertex cover of $G'$. The remaining cases can be argued for similarly: If $b \in X$, then $X \setminus \{v\} \cup \{a\}$ is a vertex cover of $G'$ and if $c \in X$, then $X \setminus \{v\} \cup \{b\}$ is a vertex cover of $G'$.
	\claimqed
\end{clproof}
		
	\begin{claim}\label{FellowsEtAlTwo_claim:degree:three:cor:2}
		If $G'$ has a vertex cover of size $k$ then $G$ has a vertex cover of size $k$.
	\end{claim}
	\begin{clproof}	
	Throughout the following, let $X'$ be a vertex cover of $G'$ of size $k$. Since $\{a, b, c\}$ is \emph{not} an independent set in $G'$, we know that $X'$ has to contain at least one vertex of $\{a, b, c\}$. If $\{a, b, c\} = N(v) \subseteq X'$, then $X'$ contains an endpoint of each edge in $E(G) \setminus E(G') = \left\lbrace \{v, x\} \mid x \in \{a, b, c\} \right\rbrace$, so we can conclude that $X'$ is a vertex cover of $G$.
	
	We now consider the cases when $X'$ contains precisely two vertices from $\{a, b, c\}$. If $\{a, b\} \subseteq X'$ and hence $c \notin X'$, then $X'$ contains $N_G(a)$ as well, to cover the edges between the vertex $c$ and vertices in $N(a)$. It follows that $X' \setminus \{a\}$ is a vertex cover of $G - v$. Since each edge in $E(G) \setminus E(G - v)$ is incident with $v$, we can conclude that $X' \setminus \{a\} \cup \{v\}$ is a vertex cover of $G$. By similar arguments we have that if $X' \cap \{a, b, c\} = \{b, c\}$, then $X' \setminus \{b\} \cup \{v\}$ is a vertex cover of $G$ and if $X' \cap \{a, b, c\} = \{a, c\}$, then $X' \setminus \{c\} \cup \{v\}$ is a vertex cover of $G$.
	
	It remains to argue the case when $X'$ contains precisely one vertex from $\{a, b, c\}$. Note that the only possible such case is when this vertex is $b$. If $X'$ contained only the vertex $a$ (resp., $c$), then the edge $\{b, c\}$ (resp., $\{a, b\}$) would remain uncovered by $X'$. Suppose $X' \cap \{a, b, c\} = \{b\}$, so $a, c \notin X'$, implying that $N(a, b, c) \subseteq X'$. Hence, $X' \setminus \{b\}$ is a vertex cover of $G - v$ and $X' \setminus \{b\} \cup \{v\}$ is a vertex cover of $G$.
	\claimqed
	\end{clproof}
	
	In the light of \cref{FellowsEtAlTwo_claim:degree:three:cor:1,FellowsEtAlTwo_claim:degree:three:cor:2}, the proposition is proved.
  \qed
\end{proof}

	The next reduction rule kernelizes all vertices that have degree four and whose neighborhood induces a subgraph with more than two edges.
	\begin{figure}[t]
		\centering
		\includegraphics[height=.125\textheight]{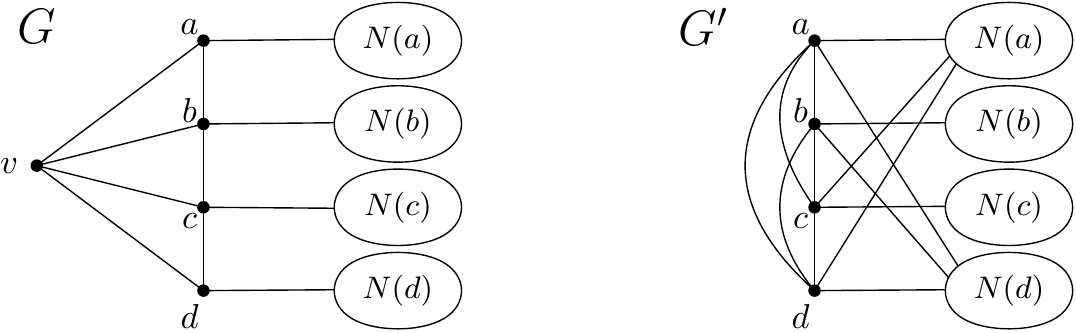}
		\caption{Illustration of \cref{FellowsEtAlTwo_reduction:degree:four}.}
		\label{FellowsEtAlTwo_fig:reduction:degree:four}
	\end{figure}
	
	\begin{reduction}\label{FellowsEtAlTwo_reduction:degree:four}
		If neither \cref{FellowsEtAlTwo_reduction:adjacent:vertices} nor \cref{FellowsEtAlTwo_reduction:cliques:co:matching:new} can be applied and
		$G$ contains a vertex $v$ with degree four such that $G[N(v)]$ has at least three edges, 
		then we can assume that (up to renaming the vertices in $N(v) = \{a, b, c, d\}$) $G[N(v)]$ is an $(a, d)$-path. We reduce $(G, k)$ to $(G', k)$, where $G'$ is the graph on vertex set $V(G) \setminus \{v\}$ and edge set $(E(G) \cap \binom{V(G')}{2}) \cup F$, where
		\begin{align}\label{FellowsEtAlTwo_eq:reduction:degree:four:path}
			F \defeq \binom{N(v)}{2} {} &\cup \left\lbrace \{x, y\} \mid x \in \{a, b\}, y \in N(d) \right\rbrace 
			                            \cup \left\lbrace \{x, y\} \mid x \in \{c, d\}, y \in N(a) \right\rbrace.
		\end{align}	
	\end{reduction}
	
	We illustrate \cref{FellowsEtAlTwo_reduction:degree:four} in \cref{FellowsEtAlTwo_fig:reduction:degree:four}.
	\begin{proposition}
		\Cref{FellowsEtAlTwo_reduction:degree:four} is safe, i.e., if its conditions are satisfied, then $G[N(v)]$ is a path and $G$ has a vertex cover of size $k$ if and only if $G'$ has a vertex cover of size $k$.
	\end{proposition}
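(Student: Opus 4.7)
The plan is to prove the proposition in two stages. First I would verify the structural claim that $G[N(v)]$ is isomorphic to $P_4$, justifying the labeling $N(v) = \{a, b, c, d\}$ with $G[N(v)]$ the $(a, d)$-path. Any graph on four vertices with at least three edges either contains a vertex of degree three---in which case that vertex $u$ satisfies $N_G(v) \subseteq N_G[u]$ and \cref{FellowsEtAlTwo_reduction:adjacent:vertices} would apply---or is isomorphic to one of $C_4$, $P_4$, or a triangle together with an isolated vertex. In the $C_4$ case, partitioning $N(v)$ into the two size-two cliques formed by a pair of non-adjacent edges satisfies the hypotheses of \cref{FellowsEtAlTwo_reduction:cliques:co:matching:new} (each $c_1 \in C_1$ has exactly one non-neighbor in $C_2$), and in the triangle-plus-singleton case the same reduction applies with $C_1$ the triangle and $C_2$ the singleton. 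Hence only the $P_4$ case survives.

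For the forward direction, let $X$ be a vertex cover of $G$ of size $k$. If $v \notin X$ then $N(v) \subseteq X$, and $X$ itself is a vertex cover of $G'$ since every $F$-edge has an endpoint in $N(v)$. If $v \in X$, set $W \defeq N(v) \setminus X$; since adjacent vertices of $N(v)$ cannot both lie in $W$, $W$ is an independent set in the path $P_4$, giving $\card{W} \le 2$ and $W \in \{\{a, c\}, \{a, d\}, \{b, d\}\}$ when $\card{W} = 2$. For $\card{W} \le 1$, the set $X' \defeq (X \setminus \{v\}) \cup W$ contains $N(v)$ and is a vertex cover of $G'$ of size at most $k$ by the same argument. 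For $\card{W} = 2$, I would pick $w \in W \cap \{a, d\}$ (non-empty in each sub-case) and set $X' \defeq (X \setminus \{v\}) \cup \{w\}$ of size $k$; the remaining $F$-edges from the other missing vertex $w'$ to $N_G(a) \setminus \{v\}$ or $N_G(d) \setminus \{v\}$ are covered because at least one of $a, d$ is missing from $X$, yielding the needed neighborhood inclusion.

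For the backward direction, let $X'$ be a vertex cover of $G'$ of size $k$. Since $\binom{N(v)}{2} \subseteq F$, $N(v)$ is a clique in $G'$ and $X'$ contains at least three of $\{a, b, c, d\}$. If all four lie in $X'$, then $X \defeq X'$ is already a vertex cover of $G$, because every $v$-edge is covered by its other endpoint. The main obstacle is the case where exactly one vertex $w \in N(v)$ is missing from $X'$: inspecting the $F$-edges $\{w, y\}$ forces $N_G(d) \setminus \{v\} \subseteq X'$ when $w \in \{a, b\}$, and $N_G(a) \setminus \{v\} \subseteq X'$ when $w \in \{c, d\}$. Accordingly set $X \defeq (X' \setminus \{d\}) \cup \{v\}$ in the first sub-case and $X \defeq (X' \setminus \{a\}) \cup \{v\}$ in the second; both have size $k$. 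Checking that $X$ covers $G$ reduces to verifying that the vertex removed from $X'$ has all its $G$-neighbors in $X$, which follows from the derived neighborhood constraint together with $\{a, d\} \notin E(G)$. Identifying the correct swap candidate and tracking which $F$-edges force which neighborhood inclusions will be the main technical hurdles.
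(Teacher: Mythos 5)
Your proposal is correct and follows essentially the same approach as the paper: the structural claim is established by the same case analysis (degree-three vertex handled by \cref{FellowsEtAlTwo_reduction:adjacent:vertices}; $C_4$ and $K_3 \cup K_1$ handled by \cref{FellowsEtAlTwo_reduction:cliques:co:matching:new}; $P_4$ remaining), and the equivalence is proved by the same case split on $\card{X \cap N(v)}$ (equivalently $\card{W}$) and $\card{X' \cap N(v)}$, with the same vertex swaps. The only cosmetic difference is that you index cases by $W = N(v)\setminus X$ rather than by $X \cap N(v)$, and you unify the choice of swap vertex via $W \cap \{a,d\}$, which is a slightly cleaner bookkeeping of the three identical sub-cases the paper treats explicitly.
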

	\begin{proof}
		We first justify the assumption that $G[N(v)]$ induces a path.
		\begin{claim}\label{FellowsEtAlTwo_claim:degree:four:conditions}
			If neither \cref{FellowsEtAlTwo_reduction:adjacent:vertices} nor \cref{FellowsEtAlTwo_reduction:cliques:co:matching:new} can be applied
			and $G$ and $v$ are as above,
			then $G[N(v)]$ induces a path.
		\end{claim}
		\begin{clproof}
			Suppose not. If $G[N(v)]$ contains at least five edges, then there has to be a vertex $x \in \{a, b, c, d\}$ which is incident with three of these edges. Hence, $N(v) \subseteq N[x]$ and we could have applied \cref{FellowsEtAlTwo_reduction:adjacent:vertices}, a contradiction. 
			
			Suppose there are four edges in $G[N(v)]$. There are only two non-isomorphic graphs on four vertices and four edges, see the right hand side of \cref{FellowsEtAlTwo_fig:graphs:on:four:vertices}. If $G[N(v)]$ induces a $C_4$, assume w.l.o.g.\ that its vertices appear in the order $a-b-c-d-a$, then we can partition $N_G(v)$ into cliques $C_1 = \{a, b\}$ and $C_2 = \{c, d\}$, and we observe that the set $M \defeq \{\{a, c\}, \{b, d\}\}$ of non-edges of $G[C_1, C_2]$ satisfies the condition of \cref{FellowsEtAlTwo_reduction:cliques:co:matching:new}\cref{FellowsEtAlTwo_reduction:cliques:co:matching:new:m}: For $x \in \{a, b\}$ there is precisely one element in $M$ that contains $x$. We could have applied \cref{FellowsEtAlTwo_reduction:cliques:co:matching:new}, a contradiction.
			If $G[N(v)]$ induces an $H_4$, assume w.l.o.g.\ that the $3$-cycle is $a-b-c-a$ and $d$ is the pendant vertex adjacent to $c$, 
			then we have that $N(v) \subseteq N[c]$ and we could have applied \cref{FellowsEtAlTwo_reduction:adjacent:vertices}.
			
			We can assume that $G[N(v)]$ contains precisely three edges. There are three pairwise non-isomorphic graphs on four vertices and three edges, shown in the left-hand side of \cref{FellowsEtAlTwo_fig:graphs:on:four:vertices}. If $G[N(v)]$ induces a star ($S_4$) with center $a$, then we have that $N(v) \subseteq N[a]$, so we could have applied \cref{FellowsEtAlTwo_reduction:adjacent:vertices}, a contradiction.
			If $G[N(v)]$ induces a $K_3 \cup K_1$, then let $C_1 \defeq \{a, b, c\}$ be the vertices that induce a $K_3$ and $C_2 = \{d\}$, where $d$ is the remaining vertex of $N(v)$. Clearly, $C_1$ and $C_2$ induce cliques and the set of non-edges of $G[C_1, C_2]$ is such that it contains precisely one element incident with each vertex in $C_1$. Hence, we could have applied \cref{FellowsEtAlTwo_reduction:cliques:co:matching:new}, a contradiction.
			We can conclude that the only case that has not been covered is when $G[N(v)]$ induces a $P_4$, which proves the claim.
			\claimqed
		\end{clproof}
		\begin{figure}[t]
		\centering
		\includegraphics[height=.1\textheight]{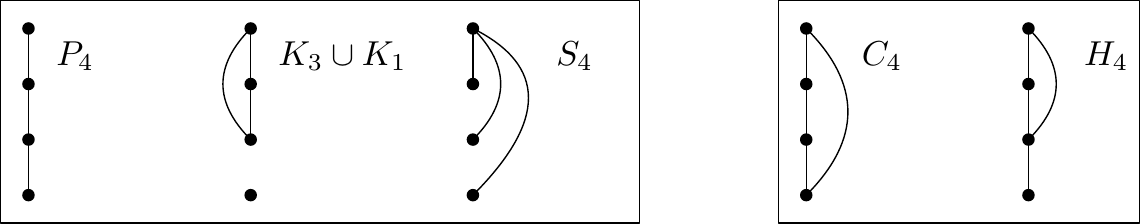}
		\caption{In the left box, all pairwise non-isomorphic graphs on four vertices and three edges and in the right box, all pairwise non-isomorphic graphs on four vertices and four edges are shown.}
		\label{FellowsEtAlTwo_fig:graphs:on:four:vertices}
	\end{figure}		
	\begin{claim}\label{FellowsEtAlTwo_claim:degree:four:safe}
			$G$ contains a vertex cover of size $k$ if and only if $G'$ contains a vertex cover of size $k$.
		\end{claim}
		\begin{clproof}
	($\Rightarrow$) Suppose $G$ has a vertex cover $X$ of size $k$. If $N(v) \subseteq X$, then every edge in $E(G') \setminus E(G)$ has at least one endpoint in $X$ by construction, so $X$ is a vertex cover of $G'$ as well. If $X$ contains precisely three vertices of $N(v)$, suppose w.l.o.g. that $X \cap N(v) = \{a, b, c\}$, then $v$ has to be contained in $X$ as well, otherwise the edge $\{v, d\}$ remains uncovered. Hence, $X \setminus \{v\} \cup \{d\} \supseteq N(v)$ is a vertex cover of $G'$ of size $k$.
		
		From now on suppose that $X$ contains precisely two vertices from $N(v)$ and note that in all of the following cases, $v \in X$. The only vertex covers of $G[N(v)]$ of size two are $\{b, c\}$, $\{a, c\}$ and $\{b, d\}$. First observe that any triple of vertices from $N_G(v)$ covers the edges $\binom{N(v)}{2}$, so we do not have to consider them explicitly in the following discussion.
		
		Suppose $X \cap N(v) = \{b, c\}$. Then, $N(a, d) \subseteq X$ , so all edges in $E(G') \setminus E(G) \setminus \binom{N(v)}{2}$ are covered by $X$. We can conclude that $X \setminus \{v\} \cup \{a\}$ (also $X \setminus \{v\} \cup \{d\}$) is a vertex cover of $G'$. If $X \cap N(v) = \{a, c\}$, then $N(b, d) \subseteq X$ and all edges in $E(G') \setminus E(G) \setminus \binom{N(v)}{2}$ that are not covered by $X$ are between $d$ and $N(a)$, so $X \setminus \{v\} \cup \{d\}$ is a vertex cover of $G'$. Similarly, if $X \cap N(v) = \{b, d\}$, then $X \setminus \{v\} \cup \{a\}$ is a vertex cover of $G'$.
		Since $G[N(v)]$ does not have a vertex cover of size at most $1$, this concludes the proof of the first direction.
		
		($\Leftarrow$) Since $N_G(v)$ is a clique in $G'$, any vertex cover of $G'$ contains at least three vertices from $N_G(v)$. Let $X'$ be a vertex cover of $G'$ of size $k$. If $N_G(v) \subseteq X'$, then $X'$ is also a vertex cover of $G$ since each edge in $E(G) \setminus E(G')$ has an endpoint in $N_G(v)$. In the remainder, we can assume that $X'$ contains precisely three vertices from $N_G(v)$. Suppose $X' \cap N_G(v) = \{a, b, c\}$. Since $d \notin X'$, we have that $N(a, d) \subseteq X'$, hence all edges between $a$ and $N(a)$ are covered by $N(a) \subseteq X'$. Together with the observation that $\{b, c\}$ is a vertex cover of $G[N(v)]$, we can conclude that $X' \setminus \{a\} \cup \{v\}$ is a vertex cover of $G$. If $X' \cap N_G(v) = \{a, b, d\}$, then $N(a, c) \subseteq X'$ since $c \notin X'$. By the same reasoning as before, we can observe that $X' \setminus \{a\} \cup \{v\}$ is a vertex cover of $G$. Similarly, if $X' \cap N_G(v) = \{b, c, d\}$ or if $X' \cap N_G(v) = \{a, c, d\}$ then we can argue that $X' \setminus \{d\} \cup \{v\}$ is a vertex cover of $G$.
		\claimqed
		\end{clproof}
		Now, by \cref{FellowsEtAlTwo_claim:degree:four:conditions}, we know that under the conditions stated in \cref{FellowsEtAlTwo_reduction:degree:four}, $G[N(v)]$ induces a path which together with \cref{FellowsEtAlTwo_claim:degree:four:safe} proves the proposition.
		\qed
	\end{proof}
	It is easy to see that \cref{FellowsEtAlTwo_reduction:pendant:edge,FellowsEtAlTwo_reduction:degree:two,FellowsEtAlTwo_reduction:adjacent:vertices,FellowsEtAlTwo_reduction:degree:three,FellowsEtAlTwo_reduction:degree:four} can be executed in polynomial time. We observe (naively) that \cref{FellowsEtAlTwo_reduction:cliques:co:matching:new} can be executed in time $\cO\left(n \cdot 2^\alpha \cdot \degreev^{\cO(1)}\right)$, where $\degreev$ denotes the degree of the vertex $v$. 
	Since for our purposes, $\degreev \le 4$ is sufficient, \cref{FellowsEtAlTwo_reduction:cliques:co:matching:new} runs in polynomial time as well and we have the following theorem. (Note that none of the presented reductions increases the parameter value.)
	\begin{theorem}[cf.\ Fellows and Stege \cite{FellowsEtAlTwo_FS99}]\label{FellowsEtAlTwo_thm:kernelization:degree}
		There is a polynomial-time algorithm that given an instance $(G, k)$ of \vc ~outputs an equivalent instance $(G', k')$, where $k' \le k$,
		\begin{enumerate}[label=(\Roman*)]
			\item the minimum degree of $G'$ is at least four and
			\item for all vertices $v \in V(G')$ with $\deg_{G'}(v) = 4$, $G'[N(v)]$ contains at most two edges.
		\end{enumerate}
	\end{theorem}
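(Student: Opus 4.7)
The plan is to apply \cref{FellowsEtAlTwo_reduction:isolated,FellowsEtAlTwo_reduction:adjacent:vertices,FellowsEtAlTwo_reduction:cliques:co:matching:new,FellowsEtAlTwo_reduction:degree:three,FellowsEtAlTwo_reduction:degree:four} exhaustively in an arbitrary order, restricting \cref{FellowsEtAlTwo_reduction:cliques:co:matching:new} to vertices $v$ with $\deg(v) \le 4$, halting as soon as none of them is applicable. Safeness of each individual rule has already been established in the preceding propositions, so by induction the resulting instance $(G',k')$ is equivalent to $(G,k)$. Inspecting each rule shows that the parameter never increases, so $k' \le k$.

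For the polynomial-time bound, every successful application of any of these rules strictly decreases $\card{V(G)}$ by at least one, so the total number of applications is at most $n$. Each individual application runs in polynomial time: \cref{FellowsEtAlTwo_reduction:isolated,FellowsEtAlTwo_reduction:adjacent:vertices,FellowsEtAlTwo_reduction:degree:three,FellowsEtAlTwo_reduction:degree:four} are local syntactic checks, and \cref{FellowsEtAlTwo_reduction:cliques:co:matching:new} runs in the stated $\cO(n \cdot 2^\alpha \cdot \alpha^{\cO(1)})$ time, which is polynomial under the restriction $\alpha \le 4$.

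It remains to verify that $G'$ satisfies (I) and (II). For (I), suppose for contradiction that $G'$ has a vertex $v$ of degree $d \le 3$. If $d = 0$, then \cref{FellowsEtAlTwo_reduction:isolated} still applies. If $d = 1$ with $N(v) = \{u\}$, then $N(v) \subseteq N[u]$, so \cref{FellowsEtAlTwo_reduction:adjacent:vertices} applies. If $d = 2$ with $N(v) = \{a,b\}$, then either $\{a,b\} \in E(G')$, in which case $N(v) \subseteq N[a]$ and \cref{FellowsEtAlTwo_reduction:adjacent:vertices} applies, or $\{a,b\} \notin E(G')$, in which case \cref{FellowsEtAlTwo_reduction:cliques:co:matching:new} applies with partition $C_1 = \{a\}$, $C_2 = \{b\}$. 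If $d = 3$, then \cref{FellowsEtAlTwo_reduction:degree:three} applies, since its stated conditions are precisely that \cref{FellowsEtAlTwo_reduction:adjacent:vertices,FellowsEtAlTwo_reduction:cliques:co:matching:new} are not applicable. For (II), if $G'$ had a vertex $v$ of degree four with at least three edges in $G'[N(v)]$, the premises of \cref{FellowsEtAlTwo_reduction:degree:four} would be met. Each case contradicts exhaustiveness.

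The hardest part of this argument is conceptual rather than computational: one must check that the non-applicability preconditions that \cref{FellowsEtAlTwo_reduction:degree:three,FellowsEtAlTwo_reduction:degree:four} impose on earlier rules are respected by the scheduling. Because those preconditions explicitly state that \cref{FellowsEtAlTwo_reduction:adjacent:vertices,FellowsEtAlTwo_reduction:cliques:co:matching:new} are not applicable, any schedule that runs until fixation automatically satisfies them, so no delicate ordering argument is required beyond the case analysis above.
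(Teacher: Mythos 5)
Your proposal is correct and follows essentially the same route as the paper: exhaustively apply \cref{FellowsEtAlTwo_reduction:isolated,FellowsEtAlTwo_reduction:adjacent:vertices,FellowsEtAlTwo_reduction:cliques:co:matching:new,FellowsEtAlTwo_reduction:degree:three,FellowsEtAlTwo_reduction:degree:four} with \cref{FellowsEtAlTwo_reduction:cliques:co:matching:new} restricted to degree at most four, relying on the already-established safeness propositions, the observation that no rule increases the parameter, and the polynomial per-rule running times; the paper states this tersely immediately before the theorem, whereas you additionally spell out the termination bound (each successful application deletes at least one vertex) and the fixed-point case analysis showing (I) and (II) hold, which is a useful elaboration but not a different argument.
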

	
	\section{Automated Vertex Cover Kernelization}\label{FellowsEtAlTwo_sec:automated}
Many reduction rules for various problems are instances of the same class that can be described as
``find a subgraph $H$ with boundary $X$ and replace it with a graph $H^*$''.
For example, \cref{FellowsEtAlTwo_reduction:degree:two} can be formulated as
``find a $P_3$ $(a,v,b)$ with boundary $\{a,b\}$ (a kind of ``local surgery'') and replace it with $z_{N[v]}$''.
Note however, that boundary connections might change during the replacement.
Proposed reduction rules of this type can be checked in roughly $2^{|X|}$ times the time it takes
to compute a minimum vertex cover in $H$ and $H^*$.
\newcommand{\nakedG}[1]{
  \foreach \i/\x/\y in {0/-4/-2, 1/-4/3, 2/-2/0, 3/0/3, 4/0/1, 5/0/-1, 6/0/-3}
    \node[smallvertex] (u\i) at (\x,\y) {};
  \foreach \i/\x/\y/\a in {0/2/-3/-135, 1/2/0/45, 2/2/2/45}{
    \ifx\relax#1\relax
      \node[smallvertex, fill=FellowsEtAlTwo_gray] (x\i) at (\x,\y) {};
    \else
      \node[smallvertex, fill=FellowsEtAlTwo_gray, label=\a:{\small $x_\i$}] (x\i) at (\x,\y) {};
    \fi
  }
  \foreach \i/\j in {0/1, 0/2, 0/6, 1/2, 1/4, 1/3, 2/4, 2/5, 3/4}
    \draw (u\i) -- (u\j);
  \foreach \i/\j in {2/1, 3/2, 4/1, 4/2, 5/0, 5/1, 6/0}
    \draw (u\i) -- (x\j);
  \begin{pgfonlayer}{background}
    \draw[FellowsEtAlTwo_gray, rounded corners, fill=lightgray!80!white] (-5,-4) rectangle (2,4);
    \node at (-4.2,-3.2) {$G$};
  \end{pgfonlayer}
}
\newcommand{\nakedH}{
  \foreach \i in {0,1,2} \node[smallvertex] (v\i) at (4,2*\i-2) {};
  \foreach \u/\v in {x0/v0, x0/v1, x1/v1, x2/v1, x2/v2, v0/v1, v1/v2} \draw (\u) -- (\v);
  \draw (x0) to[bend right] (x1);
  \begin{pgfonlayer}{background}
    \draw[FellowsEtAlTwo_gray, rounded corners, fill=lightgray!30!white] (2,-4) rectangle (5,4);
    \node at (4,-3.2) {$H$};
  \end{pgfonlayer}
}
\newcommand{\nakedHprime}{
  \node[smallvertex] (v0) at (4,-2) {};
  \node[smallvertex] (v1) at (4,2) {};
  \foreach \u/\v/\b in {x0/v0/0, x2/v1/0, x0/x1/30, x1/x2/30} \draw (\u) to[bend right=\b] (\v);
  \begin{pgfonlayer}{background}
    \draw[FellowsEtAlTwo_gray, rounded corners, fill=lightgray!30!white] (2,-4) rectangle (5,4);
    \node at (4,-3.2) {$H'$};
  \end{pgfonlayer}
}
\begin{figure}
  \centering
  \begin{tabular}[h]{ccc}
    \multirow{2}{*}[9mm]{
      \begin{subfigure}{5.3cm}
        \begin{tikzpicture}[scale=.5, label distance=-2pt]
          \nakedG{1}
          \nakedH{}
        \end{tikzpicture}
				\subcaption{}
        \label{FellowsEtAlTwo_fig:bigG}
      \end{subfigure}
    }
  &
    \begin{subfigure}{3cm}
      \begin{tikzpicture}[scale=.25, label distance=-2pt]
        \nakedG{}
        \nakedH{}
        \foreach \u in {u0, u2, u3, u4, x0, x1, x2, v1} \node[smallvertex, fill=black] at (\u) {};
      \end{tikzpicture}
			\subcaption{}
      \label{FellowsEtAlTwo_fig:Hprofile2}
    \end{subfigure}
  &
    \begin{subfigure}{3cm}
      \begin{tikzpicture}[scale=.25, label distance=-2pt]
        \nakedG{}
        \nakedH{}
        \foreach \u in {u1, u2, u3, u4, u5, u6, x0, x2, v1} \node[smallvertex, fill=black] at (\u) {};
      \end{tikzpicture}
			\subcaption{}
      \label{FellowsEtAlTwo_fig:Hprofile012}
    \end{subfigure}
  \\ &&\\
  &
    \begin{subfigure}{3cm}
      \begin{tikzpicture}[scale=.25, label distance=-2pt]
        \nakedG{}
        \nakedHprime{}
        \foreach \u in {u0, u2, u3, u4, x0, x1, x2} \node[smallvertex, fill=black] at (\u) {};
      \end{tikzpicture}
			\subcaption{}
      \label{FellowsEtAlTwo_fig:H'profile2}
    \end{subfigure}
  &
    \begin{subfigure}{3cm}
      \begin{tikzpicture}[scale=.25, label distance=-2pt]
        \nakedG{}
        \nakedHprime{}
        \foreach \u in {u1, u2, u3, u4, u5, u6, x0, x2} \node[smallvertex, fill=black] at (\u) {};
      \end{tikzpicture}
			\subcaption{}
      \label{FellowsEtAlTwo_fig:H'profile012}
    \end{subfigure}
  \end{tabular}
  \caption[Illustration of notation.]{Illustration of notation.
    \textbf{(a)} shows a graph $\glue{G}{H}$ resulting from gluing $G$ (darker background) and $H$ (lighter background).
    \textbf{(b)} shows a vertex cover of $\glue{G}{H}$ compatible with $\{2\}$,
      minimizing the intersection with $V(H)$ and, thus, implying $\profile{H}^G(\{2\})=4$.
    \textbf{(c)} shows a vertex cover~$S$ of $\glue{G}{H}$ compatible with $\{0,1,2\}$,
      minimizing the intersection with $V(H)$ and, thus, implying $\profile{H}^G(\{0,1,2\})=3$.
      Note that $S$ is not necessarily minimum or even minimal for $\glue{G}{H}$.
    \textbf{(d) and (e)} show the same as (b) and (c) but with the graph $H'$ resulting from the application of \cref{FellowsEtAlTwo_reduction:adjacent:vertices}
      to a vertex in $H$, proving $\profile{H'}^G(X)=\profile{H}^G(X)-1$ for $X\in\{\{2\}, \{0,1,2\}\}$.
  }
  \label{FellowsEtAlTwo_fig:profile}
\end{figure}
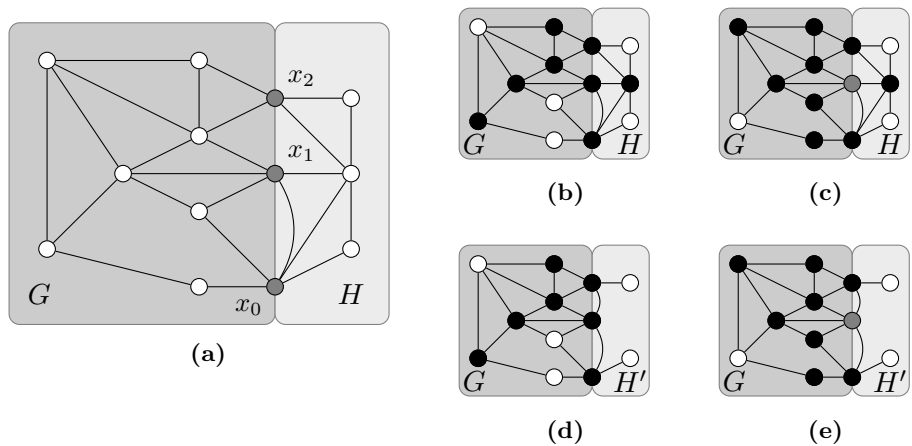
\paragraph{Notation.}
See \cref{FellowsEtAlTwo_fig:profile} for an illustration.
Let $G$ be a graph in which $t$ non-isolated vertices $x_1,\ldots,x_t$ are bijectively labeled with the integers in $[t]$.
Then, we call $G$ \emph{$t$-boundaried}.
If $\{x_1,\ldots,x_t\}$ is an independent set in $G$, we call $G$ \emph{strongly $t$-boundaried}.
For a $t$-boundaried graph $H$ and a strongly $t$-boundaried graph $G$,
we let $\glue{G}{H}$ denote the result of \emph{gluing} $G$ and $H$, that is,
identifying the vertices with the same label in the disjoint union of $G$ and $H$.
We call a set $S\subseteq V(G)$ \emph{compatible} with a set $X\subseteq [t]$ in $G$ if 
$N_G(x_i)\subseteq S \iff i\in X$ for all $i\in [t]$.
Let $\profile{H}^G\colon 2^{[t]}\to\N$ be such that, for all $X\subseteq [t]$,
$\profile{H}^G(X)$ is the smallest number of vertices of $H$
contained in any vertex cover $S$ of $\glue{G}{H}$ that is compatible with $X$ in $G$
(and $\infty$ if no such $S$ exists).
Then, we call $\profile{H}^G$ the \emph{profile} of $H$ in $G$.
It turns out that the profile is indeed independent of $G$, so we drop the superscript.

\begin{lemma}\label{FellowsEtAlTwo_lem:profile equal}
  Let $G$, $G'$ be strongly $t$-boundaried,
  let $H$ be $t$-boundaried and
  let $X\subseteq [t]$.
  Then, $\profile{H}^G(X)=\profile{H}^{G'}(X)$ for all $X\subseteq [t]$.
\end{lemma}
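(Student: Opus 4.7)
The plan is to identify, for each $X \subseteq [t]$, an intrinsic quantity $f(H, X)$ depending only on $H$ and $X$, and show that $\profile{H}^G(X) = f(H, X)$ for every strongly $t$-boundaried $G$. The key observation I would exploit is that compatibility forces the trace of any valid vertex cover on the boundary $B = \{x_1, \dots, x_t\}$ up to a free choice $Y \subseteq X$. Indeed, for $i \notin X$, compatibility produces a $G$-neighbor $y$ of $x_i$ outside $S$, and covering the edge $\{x_i, y\}$ then forces $x_i \in S$. For $i \in X$, the condition $N_G(x_i) \subseteq S$ constrains only the non-boundary $G$-neighbors (using that $G$ is strongly $t$-boundaried, so $N_G(x_i) \cap B = \emptyset$), and $x_i$ itself remains free.

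Next, I would parameterize compatible covers by $Y := \{i \in X : x_i \in S\}$. Writing $Z := (S \cap V(H)) \setminus B$, we get
\[|S \cap V(H)| = (t - |X|) + |Y| + |Z|,\]
where $Z$, together with the boundary trace $T_Y := \{x_i : i \in ([t] \setminus X) \cup Y\}$, covers $E(H)$. Defining
\[\mu_H(Y, X) := \min\bigl\{|Z| : Z \subseteq V(H) \setminus B,\ T_Y \cup Z \text{ covers } E(H)\bigr\},\]
which is a function of $H$, $X$ and $Y$ only, the lower bound
\[\profile{H}^G(X) \geq \min_{Y \subseteq X}\bigl[(t - |X|) + |Y| + \mu_H(Y, X)\bigr]\]
is immediate. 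For the matching upper bound, I would take a minimizing pair $(Y, Z)$, set the $H$-side of $S$ to $T_Y \cup Z$, and on the $G$-side include $\bigcup_{i \in X} N_G(x_i)$ together with whatever additional non-boundary vertices are needed to cover $E(G)$, while deliberately leaving out at least one neighbor of $x_i$ for each $i \notin X$. This yields a compatible vertex cover of $G \oplus H$ contributing exactly $(t - |X|) + |Y| + \mu_H(Y, X)$ vertices to $V(H)$, so both directions give the same $G$-independent formula and the lemma follows.

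The hard part will be the $G$-side construction in the upper-bound direction: one must simultaneously ensure (a) $N_G(x_i) \subseteq S$ for every $i \in X$, (b) some $y_i \in N_G(x_i)$ is outside $S$ for every $i \notin X$, and (c) every edge of $E(G)$ is covered. Strong boundariedness of $G$ is exactly what I would use here: since the boundary is independent and each $N_G(x_i) \subseteq V(G) \setminus B$, the forced-in sets and the witness choices live entirely in the non-boundary part of $G$ and interact only through $E(G)$, so one can safely pad the cover with extra non-boundary vertices without having to retract the witnesses. Making this choice explicit, or equivalently giving a symmetric exchange argument transferring compatible covers between $G \oplus H$ and $G' \oplus H$, is the technical core of the argument.
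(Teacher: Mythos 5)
The paper proves this lemma by a direct exchange argument: given compatible covers $S$ of $\glue{G}{H}$ and $S'$ of $\glue{G'}{H}$ each minimizing its $H$-part, it shows that $(S \setminus V(H)) \cup (S' \cap V(H))$ is again a compatible vertex cover of $\glue{G}{H}$ --- any uncovered edge would cross the boundary, and compatibility then forces the boundary endpoint into $S'$, hence into the combined cover --- and symmetry gives equality. Your plan is genuinely different: you try to exhibit an intrinsic quantity $f(H,X)$ and prove $\profile{H}^G(X)=f(H,X)$ for every $G$. The lower-bound half is sound, and your observation that compatibility forces $x_i \in S$ exactly for $i \notin X$ and leaves only a free choice $Y \subseteq X$ on the boundary is correct and worth having.

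The upper-bound half has a real gap, precisely where you flag the difficulty. Your construction wants $\bigcup_{i \in X} N_G(x_i) \subseteq S$ while simultaneously leaving some $y_j \in N_G(x_j)$ outside $S$ for each $j \notin X$. These two demands can conflict irreparably: take $G$ with boundary $\{x_1,x_2\}$ and a single non-boundary vertex $y$ adjacent to both, and $X=\{1\}$. Then compatibility forces both $y \in S$ (from $1 \in X$) and $y \notin S$ (from $2 \notin X$), so no compatible cover of $\glue{G}{H}$ exists and $\profile{H}^G(X)=\infty$, while your formula $\min_{Y \subseteq X}\bigl[(t-|X|)+|Y|+\mu_H(Y,X)\bigr]$ returns a finite value. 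Strong boundariedness only guarantees that each $N_G(x_i)$ avoids the boundary; it says nothing about these neighborhoods being disjoint, which is the actual source of the clash, so ``padding with extra non-boundary vertices without retracting the witnesses'' does not save the construction. To make the upper bound work one must assume a compatible cover of $\glue{G}{H}$ already exists and reuse its $G$-side --- and at that point you are running the paper's exchange argument in different clothing. I would drop the intrinsic-formula framing and prove the exchange step directly; it is shorter and sidesteps the feasibility question your formula cannot see.
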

\begin{proof}
  Let $X\subseteq [t]$ and
  let $S$ and $S'$ be vertex covers of $\glue{G}{H}$ and $\glue{G'}{H}$ that are compatible with $X$ in $G$ and $G'$, respectively,
  such that $|S\cap V(H)|$ and $|S'\cap V(H)|$ are minimum among all such vertex covers.
  To prove the claim, we show that $S^* \defeq (S\setminus V(H))\cup(S'\cap V(H))$ is a vertex cover of $\glue{G}{H}$.
  By symmetry, the same follows for $S$ and $S'$ inversed, which then implies the lemma.
  Towards a contradiction, assume that $\glue{G}{H}$ contains an edge $uv$ such that $u,v\notin S^*$.
  Then, exactly one of $u$ and $v$ is in $H$ as, otherwise, $S$ is not a vertex cover of $\glue{G}{H}$.
  Without loss of generality, let $u\in V(H)$ and $v\notin V(H)$.
  Thus, $u$ is a boundary vertex $x_i$ and $v\notin S$.
  Since $S$ and $S'$ are compatible with $X$ in $G$ and $G'$, respectively,
  we know that $N_G(x_i)\subseteq S \iff i\in X \iff N_{G'}(x_i)\subseteq S'$ and,
  as $v\in N_G(u)\setminus S$, we have $N_{G'}(u)\nsubseteq S'$.
  However, since $S'$ is a vertex cover of $\glue{G'}{H}$, we have $u\in S'$, which contradicts $u\notin S^*$ since $u\in V(H)$.
  \qed
\end{proof}

We observe that two $t$-boundaried graphs $H$ and $H'$ with the same profile
can be swapped for one another in any graph $G$ without changing the size of an optimal vertex cover,
that is, $\glue{G}{H}$ and $\glue{G}{H'}$ have the same vertex cover number.
More generally, for any $c\in\N$, we say that $H$ and $H'$ are \emph{$c$-equivalent} if $\forall_{X\subseteq [t]}\;\profile{H}(X)=\profile{H'}(X)+c$.
In this way, for any fixed size $t$, the profile gives rise to an equivalence relation on the set of $t$-boundaried graphs.
This relation allows automated discovery of reduction rules that remove vertices with undesirable properties from the input graph.
The idea is, for each induced subgraph $H$ having an undesirable property~$\Pi$,
to replace $H$ by some $c$-equivalent $H'$ that does not suffer from $\Pi$, 
while reducing $k$ by $c$.

\begin{lemma}\label{FellowsEtAlTwo_lem:equiv}
  Let $G$ be strongly $t$-boundaried,
  let $H$ and $H'$ be $t$-boundaried and $c$-equivalent for some $c\in\N$, and
  let $k\in\N$.
  Then, $\glue{G}{H}$ has a vertex cover of size at most $k$
  if and only if
  $\glue{G}{H'}$ has a vertex cover of size at most $k-c$.
\end{lemma}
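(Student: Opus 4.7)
The plan is to reuse the swapping argument from \cref{FellowsEtAlTwo_lem:profile equal}, now swapping $H$ for $H'$ in place of exchanging the ambient graph $G$ for $G'$. For the forward direction, I would start with a vertex cover $S$ of $\glue{G}{H}$ with $|S|\le k$ and read off the unique $X\subseteq [t]$ with which $S$ is compatible, namely $X=\{i\in[t] : N_G(x_i)\subseteq S\}$. Since $|S\cap V(H)|$ witnesses $\profile{H}(X)<\infty$, $c$-equivalence gives $\profile{H'}(X)=\profile{H}(X)-c<\infty$, so there exists a vertex cover $S'$ of $\glue{G}{H'}$ compatible with $X$ such that $|S'\cap V(H')|=\profile{H'}(X)$ (its existence, independently of any particular ambient graph, is exactly \cref{FellowsEtAlTwo_lem:profile equal}).

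I would then form the hybrid cover $S^{*}\defeq (S\setminus V(H))\cup(S'\cap V(H'))$ and verify that it is a vertex cover of $\glue{G}{H'}$ by a case analysis on the edges. Edges inside $V(H')$ are covered by $S'\cap V(H')$, and edges with both endpoints in $V(G)\setminus\{x_1,\dots,x_t\}$ are covered by $S\setminus V(H)$. The only nontrivial case is an edge $x_iv$ with $v\in V(G)\setminus\{x_1,\dots,x_t\}$: if $x_i,v\notin S^{*}$, then $v\notin S$ would force $N_G(x_i)\nsubseteq S$, hence $i\notin X$ by compatibility of $S$; compatibility of $S'$ with $X$ then yields some $w\in N_G(x_i)\setminus S'$, so $x_iw$ is an uncovered edge of $\glue{G}{H'}$ for $S'$, a contradiction. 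This is exactly the boundary-edge argument from the proof of \cref{FellowsEtAlTwo_lem:profile equal}, transplanted from ``change $G$'' to ``change $H$''.

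The size bound is then a one-line calculation using $|S\cap V(H)|\ge \profile{H}(X)$ and $c$-equivalence:
$$|S^{*}| \;=\; |S\setminus V(H)| + \profile{H'}(X) \;=\; |S| - |S\cap V(H)| + \profile{H'}(X) \;\le\; |S| - \profile{H}(X) + \profile{H'}(X) \;=\; |S| - c \;\le\; k - c.$$
The reverse direction is entirely symmetric: starting from a vertex cover $S'$ of $\glue{G}{H'}$ of size at most $k-c$, swap the roles of $H$ and $H'$ in the construction to obtain a vertex cover of $\glue{G}{H}$ of size $|S'| + c \le k$, where the ``$+c$'' instead of ``$-c$'' in the size bookkeeping comes from $\profile{H}(X)-\profile{H'}(X)=c$. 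The only delicate point anywhere in the argument is the boundary-edge verification above; everything else is routine accounting, so I do not expect a serious obstacle.
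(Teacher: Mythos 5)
Your proof is correct and uses the same construction as the paper: form the hybrid cover $S^{*}=(S\setminus V(H))\cup(S'\cap V(H'))$ and check that boundary edges are covered via the compatibility of $S$ and $S'$ with $X$. Your bookkeeping is in fact slightly cleaner --- by starting from an arbitrary cover $S$ of size at most $k$ and using only the inequality $|S\cap V(H)|\ge\profile{H}(X)$, you avoid the paper's extra step of selecting a minimum-size cover minimizing $|S\cap V(H)|$ and asserting the equality $|S\cap V(H)|=\profile{H}(X)$.
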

\begin{proof}
  As ``$\Rightarrow$'' is completely analogous to ``$\Leftarrow$'', we only prove the latter.
  To this end,
  let $S$ be a smallest vertex cover of $\glue{G}{H}$ that, among all such vertex covers, minimizes $|S\cap V(H)|$.
  Let $X \defeq \{i \mid N_G(x_i)\subseteq S\}$ and note that $S$ is compatible with $X$ in $G$ and $|S\cap V(H)|=\profile{H}(X)$.
  Let $S'$ be a smallest vertex cover of $\glue{G}{H'}$ that is compatible with $X$ in $G$ and,
  among all such vertex covers, minimizes $|S'\cap V(H')|$.
  As $H$ and $H'$ are $c$-equivalent, we know that $|S\cap V(H)|=|S'\cap V(H')|+c$.
  We show that $S^* \defeq (S\setminus V(H))\cup(S'\cap V(H))$ is a vertex cover of $\glue{G}{H'}$
  (clearly, $|S^*|\leq|S\setminus V(H)|+|S'\cap V(H')|= |S\setminus V(H)| + |S\cap V(H)| + c = |S|+c$).
  Towards a contradiction, assume that there is an edge $uv$ of $\glue{G}{H'}$ with $u,v\notin S^*$.
  If $u,v\notin V(H')$, then $S$ is not a vertex cover of $\glue{G}{H}$ and,
  if $u,v\in V(H')$, then $S'$ is not a vertex cover of $\glue{G}{H'}$.
  Thus, without loss of generality, $u\in V(H')$ and $v\notin V(H')$,
  implying that $u$ is a boundary vertex $x_i$.
  Since $u\notin S^*$, we know that $u\notin S'$ and, since $S'$ is a vertex cover of $\glue{G}{H'}$, we have $N_G(u)\subseteq S'$.
  Since $S'$ is compatible with $X$, we have $i\in X$ and, since $S$ is compatible with $X$, we have $N_G(u)\subseteq S$,
  implying $v\in S$ which contradicts $v\notin S^*$ since $v\notin V(H)$.
	\qed
\end{proof}
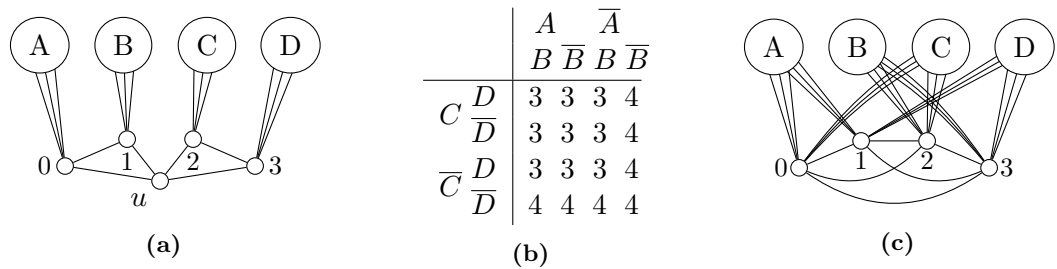
\begin{figure}[t]
  \hfill
  \begin{subfigure}{.38\textwidth}
  	\centering
    \begin{tikzpicture}[label distance=-2pt, yscale=.6, xscale=.5, baseline={(0,3)}]
      \node[smallvertex, label=below left:$u$] (u) at (0,4) {};
      \foreach \i/\N in {0/A, 1/B, 2/C, 3/D}{
        \foreach \j in {0,1,2}{
          \coordinate (N\N\j) at ($(0,4)+(90:3)+(\i*2.2-3.5,0)+(\j/3,0)$);
        }
      }
      \foreach \i/\a/\N in {0/180/A, 1/-90/B, 2/-90/C, 3/0/D}{
        \node[smallvertex, label=\a:{\small $\i$}] (v\i) at (120 - 20*\i:5) {} edge (u);
        \foreach \j in {0,1,2} \draw (v\i) -- (N\N\j);
        \node[draw, ellipse, fill=white] at (N\N1) {\N};
      }
      \draw (v0) -- (v1);
      \draw (v2) -- (v3);
    \end{tikzpicture}
		\subcaption{}
    \label{FellowsEtAlTwo_fig:d4 auto orig}
  \end{subfigure}
  \hfill
  \begin{subfigure}{.2\textwidth}
  	\centering
    \begin{tabulary}{25mm}{Cc|CCCC}
      && \multicolumn{2}{c}{$A$} & \multicolumn{2}{c}{$\overline{A}$} \\
      && $B$ & $\overline{B}$ & $B$ & $\overline{B}$ \\\hline
      \multirow{2}{*}{$C$} & $D$            & 3 & 3 & 3 & 4   \\
      & $\overline{D}$                      & 3 & 3 & 3 & 4   \\
      \multirow{2}{*}{$\overline{C}$} & $D$ & 3 & 3 & 3 & 4   \\
      & $\overline{D}$                      & 4 & 4 & 4 & 4   \\
    \end{tabulary}
		\subcaption{}
    \label{FellowsEtAlTwo_fig:d4 auto profile}
  \end{subfigure}
  \hfill
  \begin{subfigure}{.38\textwidth}
  	\centering
    \begin{tikzpicture}[label distance=-3pt, yscale=.6, xscale=.5, baseline={(0,3)}]
      \foreach \i/\N in {0/A, 1/B, 2/C, 3/D}{
        \foreach \j in {0,1,2}{
          \coordinate (N\N\j) at ($(0,4)+(90:3)+(\i*2.2-3.5,0)+(\j/3,0)$);
        }
      }
      \foreach \i/\a/\N in {0/180/A, 1/-90/B, 2/-90/C, 3/0/D}{
        \node[smallvertex, label=\a:{\small $\i$}] (v\i) at (120 - 20*\i:5) {};
        \node[draw, ellipse, fill=white] at (N\N1) {\N};
      }
      \begin{pgfonlayer}{background}
        \foreach \k in {0,1,2}
          \foreach \i/\j/\b in {0/A/0, 1/D/0, 2/C/0, 3/D/0, 1/A/0, 2/B/0, 0/C/10, 3/B/-10} \draw (v\i) to[bend left=\b] (N\j\k);
        \foreach \i/\j/\b in {0/1/0, 2/3/0, 1/2/0, 0/2/30, 1/3/30, 0/3/30} \draw (v\i) to[bend right=\b] (v\j);
      \end{pgfonlayer}
    \end{tikzpicture}
		\subcaption{}
    \label{FellowsEtAlTwo_fig:d4 auto new}
  \end{subfigure}
  \hfill
  \caption[Illustration of \cref{FellowsEtAlTwo_rr:deg4 auto}.]{Illustration of \cref{FellowsEtAlTwo_rr:deg4 auto}.
    \textbf{(a)} shows the degree-four vertex $u$ with two edges in its neighborhood. $A$--$D$ represent the sets of neighbors of $0$--$3$, respectively, in the rest of the graph ($A$--$D$ may mutually intersect or be empty).
    \textbf{(b)} shows the profile of (a) where the entry $3$ at position $(A,\overline{B},C,D)$ means that three vertices are needed to cover all edges of (a), assuming all vertices in $A$, $C$, and $D$ are already in the cover. Indeed, $\{u,1,2\}$ is a size-3 cover in this case.
    \textbf{(c)} shows a subgraph that is 0-equivalent to (a), that is, (b) is also the profile of~(c).
  }
  \label{FellowsEtAlTwo_fig:deg4 auto}
\end{figure}

Given a $t$-boundaried graph $H$ and a property $\Pi$,
we can enumerate all $t$-boundaried graphs $H'$ that are $c$-equivalent to $H$ for some $c$ and
that do not suffer from $\Pi$.

\paragraph{Two Examples.}
A proof-of-concept implementation\footnote{\url{https://github.com/igel-kun/VC_min_deg}}
was used to attack the remaining cases of degree-four vertices (see \cref{FellowsEtAlTwo_sec:small:degree}).
For a given $t$-boundaried graph~$H$ or profile~$\profile{H}$ and a given number~$n$,
the implementation enumerates all strongly $t$-boundaried, $n$-vertex graphs $H'$
and outputs $H'$ if $\profile{H}(X)=\profile{H'}(X)$ for all $X\subseteq [t]$.
Feeding the graphs displayed in \cref{FellowsEtAlTwo_fig:d4 auto orig,FellowsEtAlTwo_fig:d4 auto2 orig}, the implementation yielded, in 5s and 6s, respectively,
reduction rules that remove degree-four vertices whose neighborhood contains exactly two edges.
\begin{reduction}\label{FellowsEtAlTwo_rr:deg4 auto}
  Let $G$ contain the 4-boundaried graph $H$ depicted in \cref{FellowsEtAlTwo_fig:d4 auto orig} as an induced subgraph.
  Then, replace $H$ by the 4-boundaried graph $H'$ depicted in \cref{FellowsEtAlTwo_fig:d4 auto new}.
\end{reduction}
\begin{figure}[t]
  \hfill
  \begin{subfigure}{.38\textwidth}
  	\centering
    \begin{tikzpicture}[label distance=-2pt, yscale=.6, xscale=.5, baseline={(0,3)}]
      \node[smallvertex, label=below left:$u$] (u) at (0,4) {};
      \foreach \i/\N in {0/A, 1/B, 2/C, 3/D}{
        \foreach \j in {0,1,2}{
          \coordinate (N\N\j) at ($(0,4)+(90:3)+(\i*2.2-3.5,0)+(\j/3,0)$);
        }
      }
      \foreach \i/\a/\N in {0/180/A, 1/-90/B, 2/-90/C, 3/0/D}{
        \node[smallvertex, label=\a:{\small $\i$}] (v\i) at (120 - 20*\i:5) {} edge (u);
        \foreach \j in {0,1,2} \draw (v\i) -- (N\N\j);
        \node[draw, ellipse, fill=white] at (N\N1) {\N};
      }
      \draw (v0) -- (v1);
      \draw (v1) -- (v2);
    \end{tikzpicture}
		\subcaption{}
    \label{FellowsEtAlTwo_fig:d4 auto2 orig}
  \end{subfigure}
  \hfill
  \begin{subfigure}{.2\textwidth}
  	\centering
    \begin{tabulary}{25mm}{Cc|CCCC}
      && \multicolumn{2}{c}{$A$} & \multicolumn{2}{c}{$\overline{A}$} \\
      && $B$ & $\overline{B}$ & $B$ & $\overline{B}$ \\\hline
      \multirow{2}{*}{$C$} & $D$            & 2 & 2 & 3 & 3   \\
      & $\overline{D}$                      & 3 & 3 & 4 & 4   \\
      \multirow{2}{*}{$\overline{C}$} & $D$ & 3 & 3 & 3 & 4   \\
      & $\overline{D}$                      & 4 & 4 & 4 & 4   \\
    \end{tabulary}
		\subcaption{}
    \label{FellowsEtAlTwo_fig:d4 auto2 profile}
  \end{subfigure}
  \hfill
  \begin{subfigure}{.38\textwidth}
  	\centering
    \begin{tikzpicture}[label distance=-3pt, yscale=.6, xscale=.5, baseline={(0,3)}]
      \foreach \i/\N in {0/A, 1/B, 2/C, 3/D}{
        \foreach \j in {0,1,2}{
          \coordinate (N\N\j) at ($(0,4)+(90:3)+(\i*2.2-3.5,0)+(\j/3,0)$);
        }
      }
      \foreach \i/\a/\N in {0/180/A, 1/-90/B, 2/-90/C, 3/0/D}{
        \node[smallvertex, label=\a:{\small $\i$}] (v\i) at (120 - 20*\i:5) {};
        \node[draw, ellipse, fill=white] at (N\N1) {\N};
      }
      \begin{pgfonlayer}{background}
        \foreach \i/\j/\b in {0/A/0, 0/C/10, 2/C/0, 2/D/0, 1/B/0, 1/D/10, 3/A/-10, 3/D/0}
          \foreach \k in {0,1,2} \draw (v\i) to[bend left=\b] (N\j\k);
        
        \foreach \i/\j/\b in {2/3/0, 1/2/0, 1/3/30} \draw (v\i) to[bend right=\b] (v\j);
      \end{pgfonlayer}
    \end{tikzpicture}
		\subcaption{}
    \label{FellowsEtAlTwo_fig:d4 auto2 new}
  \end{subfigure}
  \hfill
  \caption[Illustration of \cref{FellowsEtAlTwo_rr:deg4 auto2}.]{Illustration of \cref{FellowsEtAlTwo_rr:deg4 auto2}.
    \textbf{(a)} shows the degree-four vertex $u$ with two edges in its neighborhood. $A$--$D$ represent the sets of neighbors of $0$--$3$, respectively, in the rest of the graph.
    \textbf{(b)} shows the profile of (a) where the entry $2$ at position $(A,\overline{B},C,D)$ means that two vertices are needed to cover all edges of (a), assuming all vertices in $A$, $C$, and $D$ are already in the cover. Indeed, $\{u,1\}$ is a size-2 cover in this case.
    \textbf{(c)} shows a subgraph that is 0-equivalent to (a), that is, (b) is also the profile of~(c).
  }
  \label{FellowsEtAlTwo_fig:deg4 auto2}
\end{figure}
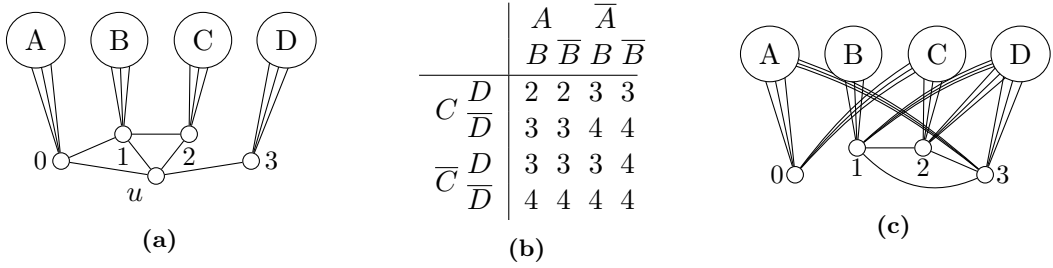
\begin{reduction}\label{FellowsEtAlTwo_rr:deg4 auto2}
  Let $G$ contain the 4-boundaried graph $H$ depicted in \cref{FellowsEtAlTwo_fig:d4 auto2 orig} as an induced subgraph.
  Then, replace $H$ by the 4-boundaried graph $H'$ depicted in \cref{FellowsEtAlTwo_fig:d4 auto2 new}.
\end{reduction}
By \cref{FellowsEtAlTwo_lem:equiv}, correctness of \cref{FellowsEtAlTwo_rr:deg4 auto,FellowsEtAlTwo_rr:deg4 auto2} can be verified by
convincing oneself that the profiles in \cref{FellowsEtAlTwo_fig:deg4 auto,FellowsEtAlTwo_fig:deg4 auto2}
are indeed the profiles of the graphs of \cref{FellowsEtAlTwo_fig:d4 auto new,FellowsEtAlTwo_fig:d4 auto2 new}, respectively (implying that the subgraphs are 0-equivalent).

Indeed, \cref{FellowsEtAlTwo_rr:deg4 auto,FellowsEtAlTwo_rr:deg4 auto2} cover all cases of degree-four vertices with two edges in the neighborhood,
leaving only the cases of one and no edges in the neighborhood in order to be able to reduce to graphs of minimum degree five.

\section{Conclusion and Open Problems}\label{FellowsEtAlTwo_sec:open:problems}

In \cref{FellowsEtAlTwo_sec:small:degree}, we have discussed the \emph{barrier degree} constant $\barrierdegree$ for \textsc{Vertex Cover} kernelization: We observed that, for some $\barrierdegree \in \bN$, \textsc{Vertex Cover} cannot be kernelized to instances of minimum degree $\barrierdegree$ unless $\ETH$ fails. In terms of $\FPT$ algorithms, the equivalent concept is that of the existence of the \emph{barrier constant} $\barrierconstant > 0$ which is such that there is no algorithm for \textsc{Vertex Cover} running in time $(1 + \barrierconstant)^k\cdot n^{\cO(1)}$ modulo $\ETH$ (e.g., \cite{FellowsEtAlTwo_CFK15,FellowsEtAlTwo_DF13,FellowsEtAlTwo_Fel14}). 
So far it is only known that $\barrierconstant < 0.2738$~\cite{FellowsEtAlTwo_CKX10} and that $\barrierdegree > 3$ (\cref{FellowsEtAlTwo_sec:small:degree}, see also \cite{FellowsEtAlTwo_FS99}).
However, observe that the question of determining the concrete value of $\barrierdegree$ is much more tangible than the one of finding the value of $\barrierconstant$: Suppose one can show that a reduction rule that kernelizes degree-$d$ vertices violates $\ETH$, for some $d \in \bN$. Then one might be able to adapt the gadgets used in that proof to show an $\ETH$-violation via a reduction rule for degree $d+1$, $d+2$, $\ldots$ vertices as well. We pose: What is the exact value of $\barrierdegree$?
	
The main theme of this paper has been to gather (from hitherto unpublished sources), carefully verify, and advance research on the question: to what minimum degree $d$ can the (\emph{formidable naturally parameterized}) \textsc{Vertex Cover} problem be kernelized to kernels of minimum degree $d$, even if the exponent of the polynomial running time bound grows wildly in $d$? \ETH{} enforces a limit. 

\paragraph{Acknowledgements.} We thank Carsten Schubert and André Nichterlein for pointing out problems with \cref{FellowsEtAlTwo_reduction:cliques:co:matching:new,FellowsEtAlTwo_reduction:degree:two} as well as \cref{FellowsEtAlTwo_fig:deg4 auto,FellowsEtAlTwo_fig:deg4 auto2} in earlier versions of this work.

\end{document}